\def\hf{\selectfont\sffamily\bfseries}
\renewcommand{\paragraph}[1]{\smallskip\noindent{\hf #1}}
\def\DSSC{\mathrm{Mult}\text{-}\mathrm{MSSC}}
\def\SSC{\mathrm{MSSC}}
\def\cR{\mathcal{R}}
\def\dkt{\mathrm{d}_{\mathrm{KT}}}
\def\dfr{\mathrm{d}_{\mathrm{FR}}}
\title{On the Approximability of Multistage Min-Sum Set Cover} 
\titlerunning{On the Approximability of Multistage Min-Sum Set Cover} 
\author{Dimitris Fotakis}{National Technical University of Athens}{fotakis@cs.ntua.gr}{[0000-0001-6864-8960}{Supported by the Hellenic Foundation for Research and Innovation (H.F.R.I.) under the ``First Call for H.F.R.I. Research Projects to support Faculty members and Researchers and the procurement of high-cost research equipment grant'',  project BALSAM, HFRI-FM17-1424.}
\author{Panagiotis Kostopanagiotis}{National Technical University of Athens}{panagiotis.kostopanagiotis@gmail.com}{}{}
\author{Vasileios Nakos}{Saarland University and Max Planck Institute for Informatics}{vnakos@mpi-inf.mpg.de}{}{ Supported by the project TIPEA that has
received funding from the European Research Council (ERC) under the European Unions Horizon
2020 research and innovation programme (grant agreement No. 850979).}
\author{Georgios Piliouras}{Singapore University of Technology and Design}{georgios.piliouras@gmail.com}{}{Supported by NRF2019-NRF-ANR095 ALIAS grant, grant
PIE-SGP-AI-2018-01, NRF 2018 Fellowship NRF-NRFF2018-07,  AME Programmatic Fund (Grant No. A20H6b0151) from the Agency for Science, Technology and Research (A*STAR) and AI Singapore grant AISG2-RP-2020-016.}
\author{Stratis Skoulakis}{Singapore University of Technology and Design}{efstratios@sutd.edu.sg}{}{ Supported by NRF 2018 Fellowship NRF-NRFF2018-07.}
\authorrunning{D. Fotakis, P. Kostopanagiotis, V. Nakos, G. Piliouras and S. Skoulakis}
\keywords{Approximation Algorithms, Multistage Min-Sum Set Cover, Multistage Optimization Problems} 
\begin{document}

\maketitle

\begin{abstract}
We investigate the polynomial-time approximability of the multistage version of Min-Sum Set Cover ($\DSSC$), a natural and intriguing generalization of the classical List Update problem. In $\DSSC$, we maintain a sequence of permutations $(\pi^0, \pi^1, \ldots, \pi^T)$ on $n$ elements, based on a sequence of requests $\cR = (R^1, \ldots, R^T)$. We aim to minimize the total cost of updating $\pi^{t-1}$ to $\pi^{t}$, quantified by the Kendall tau distance $\dkt(\pi^{t-1}, \pi^t)$, plus the total cost of covering each request $R^t$ with the current permutation $\pi^t$, quantified by the position of the first element of $R^t$ in $\pi^t$. 

Using a reduction from Set Cover, we show that $\DSSC$ does not admit an $O(1)$-approximation, unless $\mathrm{P} = \mathrm{NP}$, and that any $o(\log n)$ (resp. $o(r)$) approximation to $\DSSC$ implies a sublogarithmic (resp. $o(r)$) approximation to Set Cover (resp. where each element appears at most $r$ times). Our main technical contribution is to show that $\DSSC$ can be approximated in polynomial-time within a factor of $O(\log^2 n)$ in general instances, by randomized rounding, and within a factor of $O(r^2)$, if all requests have cardinality at most $r$, by deterministic rounding. 
\end{abstract}

\section{Introduction}
\label{s:intro}

In \emph{Multistage Min-Sum Set Cover} ($\DSSC$), we are given a universe $U$ on $n$ elements, a sequence of requests $\cR = (R_1, \ldots, R_T)$, with $R_t \subseteq U$, and an initial permutation $\pi^0$ of the elements of $U$. We aim to maintain a sequence of permutations $(\pi^0, \pi^1, \ldots, \pi^T)$ of $U$, so as to minimize the total cost of updating (or moving from) $\pi^{t-1}$ to $\pi^{t}$ in each time step plus the total cost of covering each request $R_t$ with the current permutation $\pi^t$. The cost of moving from $\pi^{t-1}$ to $\pi^{t}$ is the number of inverted element pairs between $\pi^{t-1}$ and $\pi^t$, i.e., the Kendall Tau distance $\dkt(\pi^{t-1}, \pi^t)$. The cost $\pi^t(R_t)$ of covering a request $R_t$ with a permutation $\pi^t$ is the position of the first element of $R_t$ in $\pi^t$, i.e., $\pi^t(R_t) = \min\{ i \,|\, \pi^t(i) \in R_t \}$. Thus, given $\cR = (R_1, \ldots, R_T)$, we aim to minimize $\sum_{t=1}^T \big( \dkt(\pi^{t-1}, \pi^t) + \pi^t(R_t) \big)$. 
%

The $\DSSC$ problem is a natural generalization of the (offline version of the) classical \emph{List Update} problem \cite{ST85b}, where $|R_t| = 1$ for all requests $R_t \in \cR$. The offline version of List Update is $\mathrm{NP}$-hard \cite{Amb00}, while it is known that any $5/4$-approximation has to resort to \emph{paid exchanges}, where an element different from the requested one is moved forward to the list \cite{LRR15,tim16}. $\DSSC$ was introduced in \cite{FKKSV20} as the multistage extension of Min-Sum Set Cover ($\SSC$) \cite{FLT04}, where we aim to compute a single static permutation $\pi$ that minimizes the total covering cost  $\sum_{t=1}^T \pi(R_t)$. \cite{FKKSV20} presented a (simple polynomial-time) online algorithm for $\DSSC$ with competitive ratio between $\Omega(r \sqrt{n})$ and $O(r^{3/2} \sqrt{n})$ for $r$-bounded instances, where all requests have cardinality at most $r$, and posed the polynomial-time approximability of $\DSSC$ as an interesting open question. $\DSSC$ is also related to recently studied time-evolving (a.k.a. multistage or dynamic) optimization problems (e.g., multistage matroid, spanning set and perfect matching maintenance \cite{GTW14}, time-evolving Facility Location \cite{EMS14,Svensson15}), where we aim to maintain a sequence of near-optimal feasible solutions to a combinatorial optimization problem, in response to time-evolving underlying costs, without changing too much the solution from one step to the next. 

\paragraph{Motivation.} 
$\DSSC$ is motivated by applications, such as web search, news, online shopping, paper bidding, etc., where items are presented to the users sequentially. Then, the item ranking is of paramount importance, because user attention is usually restricted to the first few items in the sequence (see e.g., \cite{StreeterGK09,CKMS01,FSR18,PT18}). If a user does not spot an item fitting her interests there, she either leaves the service (in case of news or online shopping, see e.g., the empirical evidence in \cite{DGMM20}) or settles on a suboptimal action (in case of paper bidding, see e.g., \cite{GP13}). To mitigate such situations and increase user retention, modern online services highly optimize item rankings based on user scrolling and click patterns. Each user $t$ is represented by her set of preferred items (or item categories) $R_t$\,. The goal of the service provider is to continually maintain an item ranking $\pi^t$, so that the current user $t$ finds one of her favorite items at a relatively high position in $\pi^t$. Continual ranking update is dictated by the fact that users with different characteristics and preferences tend to use the online service during the course of the day (e.g., elderly people in the morning, middle-aged people in the evening, young people at the night -- similar patterns apply for people from different countries and timezones). Moreover, different user categories react in nonuniform ways to different trends (in e.g., news, fashion, sports, scientific topics). For consistency and stability, however, the ranking should change neither too much nor too frequently. $\DSSC$ makes the (somewhat simplifying) assumptions that the service provider has a relatively accurate knowledge of user preferences and their arrival order, and that its total cost is proportional to how deep in $\pi^t$ the current user $t$ should reach, before she finds one of her favorite items, and to how much the ranking changes from one user to the next. 

From a theoretical viewpoint, $\DSSC$ was used in \cite{FKKSV20} as a natural benchmark for studying the dynamic competitive ratio of Online Min-Sum Set Cover, where the algorithm updates its permutation online, without any knowledge of future requests. As in $\DSSC$, the objective is to minimize the total moving plus the total covering cost. 

\paragraph{Contribution and Techniques.}
In this work, we initiate a study of the polynomial-time approximability of $\DSSC$. Using a reduction from Set Cover, we show (Theorem~\ref{t:hardness}) that $\DSSC$ does not admit a $c\log n$-approximation, for some absolute constant $c$, unless $\mathrm{P}=\mathrm{NP}$. Moreover our reduction establishes that an $o(r)$-approximation for $r$-bounded instances of $\DSSC$ implies an $o(r)$-approximation for Set Cover, in case each element appears in at most $r$ requests. 

Our main technical contribution is to show that $\DSSC$ can be approximated in polynomial-time within a factor of $O(\log^2 n)$ in general instances, by randomized rounding (Theorem~\ref{t:rand}), and within a factor of $O(r^2)$ in $r$-bounded instances, by deterministic rounding (Theorem~\ref{t:greedy}). 

For both results, we consider a restricted version of $\DSSC$, inspired by the Move-to-Front (MTF) algorithm for List Update, where in each time step $t$, we can only move a single element of $R_t$ from its position in $\pi^{t-1}$ to the first position of $\pi^t$. Since such a permutation $\pi^t$ coves $R_t$ with unit cost, we now aim to select the element of each $R_t$ moved to front of $\pi^t$, so as to minimize the total moving cost $\sum_{t=1}^T \dkt(\pi^{t-1}, \pi^t)$. It is not hard to see that the optimal cost of serving $\cR$ under the restricted Move-to-Front version of $\DSSC$ is within a factor of $4$ from the optimal cost under the original, more general, definition of $\DSSC$. 

Hence, approximating $\DSSC$ boils down to determining which element of $R_t$ should become the top element of $\pi^{t}$. To this end, we relax permutations to doubly stochastic matrices and consider a Linear Programming relaxation of the restricted Move-to-Front version of $\DSSC$, which we call \emph{Fractional-MTF} (see Definition~\ref{d:frac_MTF}). Given the optimal solution of the aforementioned linear program, which is a sequence of doubly stochastic matrices $(A^0, A^1, \ldots, A^T)$, with $A^0$ corresponding to the initial permutation $\pi^0$, our main technical challenge is to round each doubly stochastic matrix $A^t$ to a permutation $\pi^t$ such that (i) there is an element of $R_t$ at one of the few top positions of $\pi^t$; and (ii) the total moving cost $\sum_{t=1}^T \dkt(\pi^{t-1}, \pi^t)$ of the rounded solution is comparable to the total moving cost $\sum_{t=1}^T \dfr(A^{t-1}, A^t)$ of the optimal solution of Fractional-MTF, where $\dfr$ is a notion of distance equivalent to Spearman's footrule distance on permutations (see Definition~\ref{d:distance_lp}).  


Working towards a randomized rounding approach, we first observe that rounding each doubly stochastic matrix independently may result in a permutation sequence with total moving cost significantly larger than that of Fractional-MTF (see also the discussion after Lemma~\ref{l:relax}). In Theorem~\ref{t:rand}, we show that a dependent randomized rounding with  logarithmic scaling of entries  (Algorithm~\ref{alg:rand_rounding}), similar in spirit with the randomized rounding approach \cite{BGK10,SW11} for Generalized Min-Sum Set Cover, results in an approximation ratio of $O(\log^2 n)$. Interestingly, Algorithm~\ref{alg:rand_rounding} without the logarithmic scaling results in a permutation sequence with the expected moving cost within a factor of $4$ from the optimal moving cost of Fractional-MTF. However, we lose a logarithmic factor in the approximation ratio, because we need to scale up the entries of each doubly stochastic matrix $A^t$, so as to ensure that some element of $R_t$ appears in the few top positions of $\pi^t$ with sufficiently large probability. The other logarithmic factor is lost because there could be a logarithmic number of elements allocated to the same position of the resulting permutation by the randomized rounding. 

Our deterministic rounding of Algorithm~\ref{alg:greedy_rounding} for $r$-bounded request sequences is motivated by the deterministic rounding for Set Cover and Vertex Cover. We observe that in the optimal solution of Fractional-MTF, in each time step $t$, there is some element $e \in R_t$ with $A^{t}_{e1} \geq 1/r$ (i.e., $e$ occupies a fraction of at least $1/r$ of the first position in the ``fractional permutation'' $A^t$). Algorithm~\ref{alg:greedy_rounding} simply moves any such element to the front of $\pi^t$. The most challenging part of the analysis is to establish that for any optimal solution $(A^0, A^1, \ldots, A^T)$ of Fractional-MTF with respect to an $r$-bounded request sequence, there exists a sequence of doubly stochastic matrices $(A^0, \hat{A}^1, \ldots, \hat{A}^T)$ with the entries of each $\hat{A}^t$ being multiples of $1/r$, such that (i) the moving cost of $(A^0, \hat{A}^1, \ldots, \hat{A}^T)$ is bounded from above by the optimal cost of Fractional-MTF; and (ii) each matrix $\hat{A}^t$ contains in the first position the element that Algorithm~\ref{alg:greedy_rounding} keeps in the first position at round $t$, with mass at least $1/r$. Then we show (Lemma~\ref{l:r_integral}) that for any sequence of doubly stochastic matrices $(A^0, \hat{A}^1, \ldots, \hat{A}^T)$ satisfying the above properties, the moving cost of Algorithm~\ref{alg:greedy_rounding} is at most the moving cost of the doubly stochastic matrices,
$\sum_{t=1}^T \dfr(\hat{A}^t,\hat{A}^{t-1})$. The latter is done through the use of an appropriate potential function based on an extension of the Kendall-Tau distance to doubly stochastic matrix with entries being multiples of $1/r$.

A potentially interesting insight 
is that the technical reason for the quadratic dependence of our approximation ratios on $\log n$ and $r$ is conceptually similar to the reason for the (best possible) approximation ratio of $4 = 2\cdot 2$ in \cite{FLT04} (see the discussion after Theorem~\ref{t:rand}). Hence, we conjecture that any $o(\log^2 n)$ (resp. $o(r^2)$) approximation to $\DSSC$ must imply a sublogarithmic (resp. $o(r)$) approximation to Set Cover.


\paragraph{Other Related Work.}
The $\SSC$ problem generalizes various $\mathrm{NP}$-hard problems, such as Min-Sum Vertex Cover and Min-Sum Coloring and it is well-studied. Feige, Lovasz and Tetali~\cite{FLT04} proved that the greedy algorithm, which picks in each position the element that covers the most uncovered requests, is a $4$-approximation (that was also implicit in~\cite{BBHST98}) and that no $(4-\varepsilon)$-approximation is possible, unless $\mathrm{P} = \mathrm{NP}$. In Generalized $\SSC$ (a.k.a. \emph{Multiple Intents Re-ranking}), there is a covering requirement $K(R_t)$ for each request $R_t$ and the cost of covering a request $R_t$ is the position of the $K(R_t)$-th element of $R_t$ in the (static) permutation $\pi$. The $\SSC$ problem is the special case where $K(R_t)=1$ for all requests $R_t$. Another notable special case of Generalized $\SSC$ is the Min-Latency Set Cover problem \cite{HL05}, which corresponds to the other extreme case where $K(R_t) = |R_t|$ for all requests $R_t$. Generalized $\SSC$ was first studied by Azar et al.~\cite{AGY09}, who presented a $O(\log r)$-approximation; later $O(1)$-approximation algorithms were obtained~\cite{BGK10,SW11,ISZ14,BBFT20}.

Further generalizations of Generalized $\SSC$ have been considered, such as the Submodular Ranking problem, studied in \cite{AG11}, which generalizes both Set Cover and $\SSC$, and the Min-Latency Submodular Cover, studied by Im et al.~\cite{INZ16}. We refer to~\cite{INZ16,Im16} for a detailed discussion on the connections between these problems and their applications. 

The online version of $\SSC$, which generalizes the famous List Update problem, was studied in \cite{FKKSV20}. They proved that its static deterministic competitive ratio is $\Theta(r)$ and presented a natural  memoryless algorithm, called \emph{Move-all-Equally}, with static competitive ratio in $\Omega(r^2)$ and $2^{O(\sqrt{\log n  \cdot \log r})}$ and dynamic competitive ratio in $\Omega(r \sqrt{n})$ and $O(r^{3/2} \sqrt{n})$-competitive. Subsequently, \cite{FLPS20} considered $\SSC$ from the viewpoint of online learning. Through dimensionality reduction from permutations to doubly stochastic matrices, they obtained randomized (resp. deterministic) polynomial-time online learning algorithms with $O(1)$-regret for Generalized $\SSC$ (resp. $O(r)$-regret for $\SSC$). 


\section{Preliminaries and Basic Definitions}
The set of elements $e$ is denoted by $U$ with $|U| = n$. A permutation of the elements is denoted by $\pi$ where $\pi_i$ denotes the element lying at position $i$ (for $1\leq i \leq n$) and $\mathrm{Pos}(e,\pi)$ denotes the position of the element $e \in U$ in permutation $\pi$.

\begin{definition}[Kendall-Tau Distance]\label{d:KT}
Given the permutations $\pi^A,\pi^B$, a pair of elements $(e,e')$ is inverted if and only if $\mathrm{Pos}(e,\pi^A) > \mathrm{Pos}(e',\pi^A)$ and $\mathrm{Pos}(e,\pi^B) < \mathrm{Pos}(e',\pi^B)$ or vice versa. The Kendall-Tau distance between the permutations $\pi^A,\pi^B$, denoted by $\dkt(\pi^A,\pi^B)$, is the number of inverted pairs.
\end{definition}

\begin{definition}[Spearman' Footrule Distance]\label{d:FR}
The FootRule distance between the permutations $\pi^A,\pi^B$ is defined as $\mathrm{d}_{\mathrm{FR}}(\pi^A,\pi^B) = \sum_{e \in U}|\mathrm{Pos}(e,\pi^A) - \mathrm{Pos}(e,\pi^B)|$.
\end{definition}
\noindent The Kendall-Tau distance and FootRule distance are approximately equivalent, $\dkt(\pi^A,\pi^B) \leq \mathrm{d}_{\mathrm{FR}}(\pi^A,\pi^B) \leq 2 \cdot \dkt(\pi^A,\pi^B)$. Moreover both of them satisfy the triangle inequality.
\begin{definition} An $n \times n$ matrix with positive entries (rows stand for the elements and columns for the positions) is called stochastic if $\sum_{i = 1}^n A_{ei} = 1$ for all $e\in U$ and doubly stochastic if (additionally) $\sum_{e \in U} A_{ei}=1$ for all $1\leq i \leq n$.
\end{definition}
\noindent A permutation of the elements $\pi$ can be equivalent represented by a $0$-$1$ doubly stochastic matrix $A$, where $A_{ei}=1$ if element $e$ lies at position $i$ and $0$ otherwise. When clear from context, we use the notion of permutation and ($0$-$1$) doubly stochastic matrix interchangeably.

The notion of FootRule distance can be naturally extended to stochastic matrices. Given two doubly stochastic matrices $A,B$ consider the min-cost transportation problem, transforming row $A_e$ to the row $B_e$ where the cost of transporting a unit of mass between column $i$ and column $j$ equals $|i-j|$. Formally for each row $e$, define a complete bipartite graph where on the left part lie the entries $(e,i)$ for $1\leq i \leq n$ and on the right part the entries $(e,j)$ for $1\leq j \leq n$. The mass transported from entry $(e,i)$ to entry $(e,j)$ (denoted as $f_{ij}^e$)
costs $f_{ij}^e\cdot |i-j|$ and the total mass \textit{leaving} $(e,i)$ equals $A_{ei}$ and the total  mass \textit{arriving} at $(e,j)$ equals $B_{ej}$.
\begin{definition}\label{d:distance_lp}
The FootRule distance between two stochastic matrices 
$A,B$, denoted by $\mathrm{d}_{\mathrm{FR}}(A,B)$, is the optimal value of the following linear program,
\begin{equation*}
    \begin{array}{ll@{}ll}
        \text min & \displaystyle \sum_{e \in U} \sum_{i = 1}^{n} \sum_{j=1}^n |i - j| \cdot f_{ij}^e &\\
        \text{ s.t } & \displaystyle \sum_{i=1}^{n} f_{ij}^e = B_{ej}~~~~\text{for all }e \in U \text{ and }j=1,\ldots ,n&&\\
        & \displaystyle \sum_{j=1}^{n} f_{ij}^e = A_{ei}~~~~\text{for all }e \in U \text{ and }i=1,\ldots, n&&\\
        & \displaystyle f_{ij}^e \geq 0~~~~~~~~~~~\text{for all } e \in U \text{ and }i,j = 1,\ldots, n
    \end{array}
\end{equation*}
\end{definition}
\begin{example}
Let the stochastic matrices $A = 
\begin{pmatrix}
1 & 0 & 0 \\
0 & 1 & 0 \\
0 & 0 & 1
\end{pmatrix}$, $B = 
\begin{pmatrix}
1/3 & 1/3 & 1/3 \\
1/2 & 1/2 & 0 \\
1/4 & 0 & 3/4
\end{pmatrix}$. The FootRule distance $\mathrm{d}_{\mathrm{FR}}(A,B) = \underbrace{(0\cdot 1/3 + 1\cdot 1/3 + 2\cdot 1/3)}_{\text{first row}}$ + $\underbrace{(1\cdot 1/2 + 0\cdot 1/2 + 1\cdot 0)}_{\text{second row}}$
+ $\underbrace{(2\cdot 1/4 + 1\cdot 0 + 0\cdot 3/4)}_{\text{third row}} = 2$.
\end{example}
\noindent Up next we present the formal definition of Multistage Min-Sum Set Cover.
\begin{definition}[\textbf{Multistage Min-Sum Set Cover}]
Given a universe of elements $U$, a sequence of
requests $R_1,\ldots,R_T \subseteq U$ and an initial permutation of the elements $\pi^0$. 
The goal is to select a sequence of permutation $\pi^1,\ldots,\pi^T$ that minimizes 
$$\sum_{t=1}^{T} \pi^t(R_t) + \sum_{t=1}^{T} \dkt ( \pi^t, \pi^{t-1} )$$
where $\pi^t(R_t)$ is the position of the first element of $R_t$ that we encounter in $\pi^t$, $\pi^t(R_t) = \min \{1\leq i \leq n:~ \pi_i^t \in R_t\}$.
\end{definition}
\noindent We refer to $\sum_{t=1}^T\pi^t(R_t)$ as \textbf{covering cost} and to $\sum_{t=1}^T \dkt(\pi^t,\pi^{t-1})$ as \textbf{moving cost}.
We denote with $\pi_{\mathrm{Opt}}^t$ the permutation of the optimal solution of $\DSSC$ at round $t$, with $o_t$ the element that the optimal solution uses to cover the request $R_t$ (the element of $R_t$ appearing first in $\pi_{\mathrm{Opt}}^t$), and with $\mathrm{OPT}_{\DSSC}$ the cost of the optimal solution. Finally we call an instance of $\DSSC$ \textit{r-bounded} in case the cardinality of the requests is bounded by $r$, $|R_t| \leq r$.
\section{Approximation Algorithms for Multistage Min-Sum Set Cover}\label{s:main}
\noindent There exists an
approximation-preserving reduction from $\mathrm{Set-Cover}$ to  
$\DSSC$ that provides us with the following inapproximability results.
\begin{theorem}\label{t:hardness}
\begin{itemize}
    \item There is no $c \cdot \log n$-approximation algorithm for $\DSSC$ (for a sufficienly small constant $c$) unless $\mathrm{P = NP}$.
    
    \item For $r$-bounded sequences, there is no $o(r)$-approximation algorithm for $\DSSC$, unless there is a
    $o(r)$-approximation algorithm for $\mathrm{Set-Cover}$ with each element being covered by at most $r$ sets.
\end{itemize}
\end{theorem}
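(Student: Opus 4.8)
\emph{The reduction.} I would give an approximation-preserving reduction from Set Cover and then invoke the $(1-\varepsilon)\ln n$-inapproximability of Set Cover (unless $\mathrm{P}=\mathrm{NP}$). Starting from a Set Cover instance with universe $V=\{v_1,\dots,v_m\}$ and sets $\mathcal{S}=\{S_1,\dots,S_k\}$ — and, for the second bullet, with every $v_j$ contained in at most $r$ sets — I build a $\DSSC$ instance with universe $U=\{s_1,\dots,s_k\}$ (one element per set, so $n=k$ is polynomial in the size of the Set Cover instance), an arbitrary initial permutation $\pi^0$, and request sequence obtained by concatenating the block $R_1,\dots,R_m$ exactly $N$ times, where $R_j:=\{s_i:\ v_j\in S_i\}$ and $N$ is a fixed large polynomial in $n$ (large enough that one-time terms of order $n^2$ are dwarfed by $mN$). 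Since $|R_j|$ equals the number of sets containing $v_j$, under the frequency-$r$ assumption the $\DSSC$ instance is $r$-bounded, which is exactly why the same construction serves both bullets. Writing $\mathrm{opt}_{\mathrm{SC}}$ for the Set Cover optimum, the \emph{completeness} direction is easy: given a cover of size $\ell$, put its $\ell$ sets in the first $\ell$ positions of a permutation $\pi^\star$, pay once to move from $\pi^0$ to $\pi^\star$ (cost at most $\binom{n}{2}$) and keep $\pi^t=\pi^\star$ thereafter; every request is covered at position at most $\ell$, so the total cost is $O(mN\,\mathrm{opt}_{\mathrm{SC}})$ and hence $\mathrm{OPT}_{\DSSC}\le O(mN\,\mathrm{opt}_{\mathrm{SC}})$.

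\emph{Soundness.} The substantive claim is that from any schedule $(\pi^0,\dots,\pi^{mN})$ of total cost $C$ one can extract a set cover of size $O(C/(mN))$, i.e.\ $\mathrm{OPT}_{\DSSC}\ge\Omega(mN\,\mathrm{opt}_{\mathrm{SC}})$. I would cut time into the $N$ identical rounds; by averaging, a constant fraction of rounds are \emph{good}, meaning that the moving cost spent inside the round plus the covering cost spent inside the round is $O(C/N)$. Inside a good round, the $m$ permutations used to serve $R_1,\dots,R_m$ are all close, in Kendall–Tau and hence (up to a factor $2$) in footrule distance, to a common reference permutation; therefore, for every request whose covering cost is at most twice the round's average, its covering set-element occupies a low-indexed position of the reference permutation, so these set-elements form a small collection covering all but few of the $v_j$'s. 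The missed $v_j$'s are precisely those served expensively in this round, and here the repetition is crucial: an element served expensively in a constant fraction of the $N$ rounds contributes $\Omega(1)\cdot(\text{threshold})\cdot N$ to $C$, so only $O(C/(mN))$ such elements can exist over the whole schedule, and one adds a single set per such element. The step I expect to be the main obstacle is carrying this out so that the extracted cover has size $O(C/(mN))$ rather than the $O(C/N)$ coming from a naive ``union of low positions over all rounds'' argument; this seems to force one to use genuinely that the round-robin sequence re-covers all of $V$ in every round (so that locally cheap static snapshots cannot be exploited) and to control the footrule drift within a round against the round's moving budget through an appropriate potential, in the spirit of Lemma~\ref{l:r_integral}.

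\emph{Assembly.} An $\alpha$-approximation for $\DSSC$ run on this instance returns a schedule of cost $C\le\alpha\,\mathrm{OPT}_{\DSSC}\le O(\alpha\,mN\,\mathrm{opt}_{\mathrm{SC}})$, from which soundness extracts a cover of size $O(\alpha\,\mathrm{opt}_{\mathrm{SC}})$. Since Set Cover cannot be approximated within $(1-\varepsilon)\ln n$ unless $\mathrm{P}=\mathrm{NP}$ and $\log n=\Theta(\log(\text{instance size}))$, choosing $\alpha=c\log n$ for a sufficiently small absolute constant $c$ is impossible unless $\mathrm{P}=\mathrm{NP}$, which is the first bullet. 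For the second bullet, applying the reduction to Set Cover instances in which each element lies in at most $r$ sets produces an $r$-bounded $\DSSC$ instance, and the same chain turns any $o(r)$-approximation for $r$-bounded $\DSSC$ into an $o(r)$-approximation for Set Cover with element frequency at most $r$.
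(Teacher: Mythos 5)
There is a genuine gap, and it is not merely the technical obstacle you flag at the end: the soundness statement you need is false for the instances you construct. You claim that from any schedule of cost $C$ on the $N$-fold repeated block one can extract a set cover of size $O(C/(mN))$, equivalently that $\mathrm{OPT}_{\DSSC}\ge\Omega(mN\cdot\mathrm{opt}_{\mathrm{SC}})$. But the repeated instance is essentially an instance of static Min-Sum Set Cover: one may move once (cost $O(n^2)$, dwarfed by $mN$) to a fixed permutation $\pi$ and pay $N\sum_j\pi(R_j)$ thereafter, and $\min_\pi\sum_j\pi(R_j)$ can be far smaller than $m\cdot\mathrm{opt}_{\mathrm{SC}}$. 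Concretely, take one set containing $m-\sqrt{m}$ of the $v_j$'s and a distinct singleton set for each of the remaining $\sqrt{m}$ elements: then $\mathrm{opt}_{\mathrm{SC}}=\Theta(\sqrt{m})$, yet the static permutation listing the big set first and the singletons next serves each block at cost $O(m)$, so $C=O(mN)$ and $C/(mN)=O(1)\ll\mathrm{opt}_{\mathrm{SC}}$; no cover of size $O(C/(mN))$ exists. Worse, this shows your instance family admits an $O(1)$-approximation (run the $4$-approximation for $\SSC$ of Feige--Lov\'asz--Tetali on the block, move to that permutation once, and never move again), so no $c\log n$ or $\Omega(r)$ hardness can be carried by these instances no matter how the extraction argument is organized; the averaging-over-rounds/potential-function repair you sketch cannot succeed.

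The missing idea, which is how the paper proceeds, is to make the moving cost (not the repeated covering cost) encode the Set Cover cardinality. The paper keeps your element/request correspondence (elements of $U$ are the sets, request $R_j$ is the family of sets containing $v_j$, issued once each, so $r$-boundedness corresponds exactly to frequency at most $r$), but pads the universe with $n^2m$ dummy elements that occupy the first $n^2m$ positions of $\pi^0$, with all real elements at the back. Then every distinct element an algorithm ever uses to cover a request costs about $n^2m$ (either in $\dkt$-moving cost to bring it past the dummies or in covering cost if left behind them), so $\mathrm{Alg}\ge n^2m\cdot|\mathrm{CoverAlg}|$, while a solution built from an optimal cover pays at most $|\mathrm{OPT}_{\mathrm{SetCover}}|\cdot(n^2m+n)+m\cdot|\mathrm{OPT}_{\mathrm{SetCover}}|$; a $c$-approximation for $\DSSC$ therefore yields an $O(c)$-approximate set cover, and the logarithmic and $o(r)$ conclusions follow since $\log|U|=\Theta(\log n)$ when $m=\mathrm{poly}(n)$. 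Your completeness direction and the frequency-$r$ bookkeeping for the second bullet are fine, but without a mechanism of this padding type the reduction does not establish either bullet.
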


\noindent The proof of Theorem~\ref{t:hardness} is fairly simple, given an instance of $\mathrm{Set-Cover}$ we construct an instance of $\DSSC$ in which the initial permutation $\pi^0$ contains in the first positions some dummy elements (they do not appear in any of the requests) and in the last positions the sets of the $\mathrm{Set-Cover}$ (we consider an element of $\DSSC$ for each set of $\mathrm{Set-Cover}$). Finally each request for $\DSSC$ is associated with an element of the $\mathrm{Set-Cover}$ and contains the \textit{elements in $\DSSC$/ sets in $\mathrm{Set-Cover}$} containing it.

\begin{proof}
Let the equivalent definition of $\mathrm{Set-Cover}$ in which we are given a universe of element $E = \{1,\ldots,n\}$ and sets $S_1,S_2,\ldots,S_m \subseteq E$ and we are asked to select the minimum number of elements covering all the sets (an element $e$ covers set $S_i$ if $e \in S_i$).

Consider the instance of $\DSSC$ with the elements $U= \{1,\ldots,n\} \cup \{d_1,\ldots,d_{n^2m}\}$. The elements $\{d_1,\ldots,d_{n^2m}\}$
are dummy in the sense that
they appear in none of the requests $R_t$. Let the initial permutation $\pi_0$ contain in the first $n^2m$ positions the dummy elements and in the last $n$ positions the elements $\{1,\ldots,n\}$, $\pi_0 = [d_1,\ldots,d_{n^2m},1,\ldots,n]$ and the request sequence of 
$\DSSC$ be $S_1,S_2,\ldots,S_m$.\\

Let a $c$-approximation algorithm for $\DSSC$ producing the permutation $\pi_1,\ldots,\pi_m$ the cost of which is denoted by $\mathrm{Alg}$. Let also $\mathrm{CoverAlg}$ denote the set composed by the element that the $c$-approximation algorithm uses to cover the requests, $\mathrm{CoverAlg} = \{\text{the element of }S_t\text{ appearing first in }\pi_t\}$. Then,
$$\mathrm{Alg} \geq n^2 m \cdot |\mathrm{CoverAlg}|$$

Now consider the following solution 
for $\DSSC$ constructed by the optimal solution for $\mathrm{Set-Cover}$. This solution initially moves the elements of the optimal covering set
$\mathrm{OPT}_\mathrm{SetCover}$ to the first positions and then never changes the permutation. Clearly the cost of this solution is upper bounded by
$$\mathrm{Set-Cover}_{\DSSC} \leq \underbrace{|\mathrm{OPT}_\mathrm{SetCover}| \cdot (n^2m + n)}_{\text{moving cost}} + \underbrace{m\cdot |\mathrm{OPT}_\mathrm{SetCover}|}_{\text{covering cost}}  $$
\noindent In case $\mathrm{Alg} \leq c \cdot \mathrm{Set-Cover}_{\DSSC}$, we directly get that $|\mathrm{CoverAlg}| \leq 3c \cdot |\mathrm{OPT}_\mathrm{SetCover}|$.\\

\noindent There is no polynomial-time approximation algorithm for $\mathrm{Set}-\mathrm{Cover}$ with approximation ratio better than $\log m$. The latter holds even for instance of $\mathrm{Set}-\mathrm{Cover}$ for which $m = \mathrm{poly}(n)$ \cite{alon2006} where $\mathrm{poly}(\cdot)$ is a polynomial with degree bounded by a universal constant. Since the number of elements $|U|$, in the constructed instance of $\DSSC$ is $n^2m$, any $c\cdot \log |U|$-approximation for $\DSSC$ (for $c$ sufficiently small) implies an approximation algorithm for $\mathrm{Set}-\mathrm{Cover}$
with approximation ratio less than $\log n$. In case
there exists an $c=o(r)$-approximation algorithm for $\DSSC$ for requests sequences $R_1,\ldots,R_T$ where $|R_t|\leq r$, we obtain an $o(r)$-approximation for algorithm for $\mathrm{Set-Cover}$ for sets with cardinality bounded by $r$. In the standard form of $\mathrm{Set}-\mathrm{Cover}$ this is translated into the fact that each element belongs in at most $r$ sets.
\end{proof}

Both the $O(\log^2 n)$-approximation algorithm (for requests of general cardinality) and the $O(r^2 )$-approximation algorithm for $r$-bounded requests, that we subsequently present,
are based on rounding a linear program called \textit{Fractional Move To Front}. The latter is the linear program relaxation of \textit{Move To Front}, a problem closely related to Multistage Min-Sum Set Cover.$~\mathrm{MTF}$ asks for a sequence of permutations $\pi^1,\ldots,\pi^T$ such as at each round $t$, an element of $R_t$ lies on the first position of $\pi^t$ and $\sum_{t=1}^T \mathrm{d}_{\mathrm{FR}}(\pi^t,\pi^{t-1})$ is minimized.

\begin{definition}\label{d:frac_MTF}
Given a sequence of requests $R_1,\ldots,R_T \subseteq U$ and an initial permutation of the elements $\pi^0$, consider the following linear program, called $\mathrm{Fractional- MTF}$,
\begin{equation*}
    \begin{array}{ll@{}ll}
        \text min & \displaystyle \sum_{t=1}^{T} \mathrm{d}_{\mathrm{FR}}(A^t,A^{t-1})&\\
        \text{ s.t } & \displaystyle \sum_{i=1}^{n} A_{ei}^t = 1 ~~~~\text{for all }e \in U\text{ and } t = 1,\ldots, T &&\\
        & \displaystyle \sum_{e \in U} A_{ei}^t = 1 ~~~~\text{for all }i = 1,\ldots,n\text{ and } t = 1,\ldots, T&\\
        & \displaystyle \sum_{e \in R_t} A_{e1}^t = 1 ~~~\text{for all } t = 1,\ldots, T&\\
        & \displaystyle A^0 = \pi^0 &&\\
        & \displaystyle A_{ei}^t \geq 0 ~~~~~~~~\text{for all } e \in U, ~i = 1,\ldots, n\text{ and } t = 1,\ldots, T&\\
    \end{array}
\end{equation*}
where $\mathrm{d}_{\mathrm{FR}}(\cdot,\cdot)$ is the FootRule distance of Definition~\ref{d:distance_lp}.
\end{definition}
\noindent  There is an elegant argument (appeared in previous works, e.g., \cite{FKKSV20})
showing that the optimal solution of $\mathrm{MTF}$ is at most $4 \cdot \mathrm{OPT}_{\DSSC}$. In Lemma~\ref{l:relax} we provide the argument and establish that $\mathrm{Fractional- MoveToFront}$ is a $4$-approximate relaxation of $\DSSC$.
\begin{lemma}\label{l:relax}
$\sum_{t=1}^T \mathrm{d}_{\mathrm{FR}}(A^t,A^{t-1}) \leq 4 \cdot \mathrm{OPT}_{\DSSC}$
where $A^1,\ldots,A^t$ is the optimal solution of $\mathrm{Fractional- MTF}$.
\end{lemma}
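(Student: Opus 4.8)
The plan is to bound the LP optimum from above by exhibiting an explicit \emph{feasible} solution of $\mathrm{Fractional\text{-}MTF}$ of cost at most $4\cdot\mathrm{OPT}_{\DSSC}$; since the linear program is a minimization and $(A^0,\dots,A^T)$ is an optimal solution, this immediately gives the claim. Moreover the feasible solution I build will be integral — an actual sequence of permutations — so the same computation also records the folklore fact that the optimal cost of $\mathrm{MTF}$ is at most $4\cdot\mathrm{OPT}_{\DSSC}$.

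Take the optimal $\DSSC$ solution $(\pi^0,\pi_{\mathrm{Opt}}^1,\dots,\pi_{\mathrm{Opt}}^T)$ and, recalling that $o_t$ is the element of $R_t$ that this solution uses to cover $R_t$ (so $\mathrm{Pos}(o_t,\pi_{\mathrm{Opt}}^t)=\pi_{\mathrm{Opt}}^t(R_t)$), define $\sigma^0:=\pi^0$ and, for $t\ge 1$, let $\sigma^t$ be obtained from $\pi_{\mathrm{Opt}}^t$ by moving $o_t$ to the first position while keeping the relative order of all other elements unchanged. I then check that $(\sigma^0,\sigma^1,\dots,\sigma^T)$ is feasible for $\mathrm{Fractional\text{-}MTF}$: each $\sigma^t$ is a $0$-$1$ doubly stochastic matrix, $\sigma^0=\pi^0$, and since $o_t\in R_t$ sits at position $1$ of $\sigma^t$ we have $\sum_{e\in R_t}\sigma^t_{e1}=1$, i.e. the "top of the list" constraint holds.

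For the cost, two ingredients suffice. First, a single move-to-front of the element at position $p_t:=\pi_{\mathrm{Opt}}^t(R_t)$ displaces that element by $p_t-1$ and shifts the $p_t-1$ elements above it down by one, so $\mathrm{d}_{\mathrm{FR}}(\sigma^t,\pi_{\mathrm{Opt}}^t)=2(p_t-1)\le 2\,\pi_{\mathrm{Opt}}^t(R_t)$. Second, by the triangle inequality for $\mathrm{d}_{\mathrm{FR}}$ together with $\mathrm{d}_{\mathrm{FR}}(\cdot,\cdot)\le 2\,\dkt(\cdot,\cdot)$,
\[
\mathrm{d}_{\mathrm{FR}}(\sigma^t,\sigma^{t-1})\ \le\ \mathrm{d}_{\mathrm{FR}}(\sigma^t,\pi_{\mathrm{Opt}}^t)+2\,\dkt(\pi_{\mathrm{Opt}}^t,\pi_{\mathrm{Opt}}^{t-1})+\mathrm{d}_{\mathrm{FR}}(\pi_{\mathrm{Opt}}^{t-1},\sigma^{t-1}).
\]
Summing over $t=1,\dots,T$, noting $\mathrm{d}_{\mathrm{FR}}(\pi^0,\sigma^0)=0$, and re-indexing the last term as $\sum_{t=1}^{T}\mathrm{d}_{\mathrm{FR}}(\pi_{\mathrm{Opt}}^{t-1},\sigma^{t-1})=\sum_{t=1}^{T-1}\mathrm{d}_{\mathrm{FR}}(\sigma^t,\pi_{\mathrm{Opt}}^t)\le \sum_{t=1}^{T}2\,\pi_{\mathrm{Opt}}^t(R_t)$, I obtain
\[
\sum_{t=1}^{T}\mathrm{d}_{\mathrm{FR}}(\sigma^t,\sigma^{t-1})\ \le\ 4\sum_{t=1}^{T}\pi_{\mathrm{Opt}}^t(R_t)+2\sum_{t=1}^{T}\dkt(\pi_{\mathrm{Opt}}^t,\pi_{\mathrm{Opt}}^{t-1})\ \le\ 4\,\mathrm{OPT}_{\DSSC},
\]
which dominates $\sum_{t=1}^T\mathrm{d}_{\mathrm{FR}}(A^t,A^{t-1})$ by optimality of $(A^t)$.

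The argument is elementary; the only points that need care are verifying the third constraint of $\mathrm{Fractional\text{-}MTF}$ for $(\sigma^t)$, the exact displacement count $2(p_t-1)$ of one move-to-front, and the index bookkeeping ensuring that each covering term $\pi_{\mathrm{Opt}}^t(R_t)$ is charged with total coefficient at most $4$ (once via step $t$, once via step $t{+}1$) while each moving term is charged with coefficient $2$. I do not expect a genuine obstacle here — at most, presenting the triangle-inequality telescoping cleanly.
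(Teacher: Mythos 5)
Your proposal is correct: the feasible solution you exhibit ($\sigma^t$ obtained from $\pi^t_{\mathrm{Opt}}$ by one move-to-front of $o_t$) satisfies the constraint $\sum_{e\in R_t}\sigma^t_{e1}=1$ and $\sigma^0=\pi^0$, the displacement count $\dfr(\sigma^t,\pi^t_{\mathrm{Opt}})=2(p_t-1)$ is right, and the triangle-inequality accounting charges each covering term with coefficient at most $4$ and each moving term with coefficient $2$, so LP optimality gives the claim. Your route differs from the paper's in the construction and the accounting. The paper builds the comparison sequence \emph{sequentially}: $\pi^t$ is obtained from the previously constructed $\pi^{t-1}$ by moving $o_t$ to its front, and the cost is controlled by an amortized argument with potential $\dkt(\pi^t,\pi^t_{\mathrm{Opt}})$, counting inverted pairs via the sets $Left_t$ and $Right_t$; this yields $\sum_t\dkt(\pi^t,\pi^{t-1})\le 2\,\mathrm{OPT}_{\DSSC}$ and then the factor $2$ between $\dfr$ and $\dkt$ gives $4$. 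You instead keep each $\sigma^t$ anchored to the optimal trajectory and use only the triangle inequality for $\dfr$ plus $\dfr\le 2\dkt$, which avoids the inverted-pair counting entirely and is arguably more elementary. What the paper's construction buys in exchange is that its sequence is feasible for the \emph{restricted} Move-to-Front process described in the introduction (each permutation differs from its predecessor by a single move-to-front of an element of $R_t$), a property your $\sigma^t$ does not have in general since consecutive $\sigma$'s can differ arbitrarily; for the lemma as stated (an upper bound on the optimum of $\mathrm{Fractional}$-$\mathrm{MTF}$, or even of $\mathrm{MTF}$ as defined in the paper) this is immaterial, so your proof stands as a valid alternative.
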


\begin{proof}[Proof of Lemma~\ref{l:relax}]
Let $o_t$ the element of $R_t$ appearing first in the permutation $\pi_{\mathrm{Opt}}^t$. Consider the sequence of permutation $\pi^0,\pi^1,\ldots,\pi^T$ constructed by moving at each round $t$, the element $o_t$ to the first position of the permutation. Notice that $\pi^0,\pi^1,\ldots,\pi^T$ is a feasible solution for both $\mathrm{MoveToFront}$ and $\mathrm{Fractional}-\mathrm{MTF}$. The first key step towards the proof of Lemma~\ref{l:relax} is that
\[\dkt(\pi^t,\pi^{t-1}) + \dkt(\pi^t,\pi^t_{\mathrm{Opt}}) - 
\dkt(\pi^{t-1},\pi^{t}_{\mathrm{Opt}}) \leq 2\cdot \pi^t_{\mathrm{Opt}}(R_t)
\]
To understand the above inequality, let $k_t$ be the position of $o_t$ in permutation $\pi^{t-1}$. Out of the $k_t - 1$ elements on the right of $o_t$ in permutation $\pi^{t-1}$, let $Left_t$ ($Right_t$) denote the elements that are on the left (right) of $o_t$ in permutation $\pi^{t-1}_{\mathrm{Opt}}$. It is not hard to see that $\pi^t_{\mathrm{Opt}}(R_t) \geq |Left_t|$, $\dkt(\pi^t,\pi^{t-1}) = |Left_t| + |Right_t|$ and  
$\dkt(\pi^t,\pi^t_{\mathrm{Opt}}) - 
\dkt(\pi^{t-1},\pi^{t}_{\mathrm{Opt}}) = |Left_t| - |Right_t|$. Using the fact that $\dkt(\pi^{t},\pi^{t}_{\mathrm{Opt}}) - \dkt(\pi^{t-1},\pi^{t}_{\mathrm{Opt}}) \leq \dkt(\pi^{t}_{\mathrm{Opt}},\pi^{t-1}_{\mathrm{Opt}})$ and the previous inequality we get,
\[\dkt(\pi^t,\pi^{t-1}) + \dkt(\pi^t,\pi^t_{\mathrm{Opt}})-
\dkt(\pi^{t-1},\pi^{t-1}_{\mathrm{Opt}})
\leq 
 2 \cdot \pi^t_{\mathrm{Opt}}(R_t) + \dkt(\pi^{t}_{\mathrm{Opt}},\pi^{t-1}_{\mathrm{Opt}})
\]
\noindent and by a telescopic sum we get $\sum_{t=1}^T \dkt(\pi^t,\pi^{t-1}) \leq 2\cdot \mathrm{OPT}_{\DSSC}$. The proof follows by the fact that $\dfr(\pi^t,\pi^{t-1}) \leq 2\cdot  \dkt(\pi^t,\pi^{t-1})$.
\end{proof}

As already mentioned, our main technical contribution is the design of \textit{rounding schemes} converting the optimal solution, $A^1,\ldots,A^T$, of $\mathrm{Fractional- MTF}$ into a sequence of permutations $\pi^1,\ldots,\pi^T$. This is done so as to bound the moving cost of our algorithms by the moving cost $\sum_{t=1}^T \dfr(A^t,A^{t-1})$. We then separately bound the covering cost, $\sum_{t=1}^T \pi^t(R_t)$ by showing that always an element of $R_t$ lies on the first positions of $\pi^t$.

The main technical challenge in the design of our rounding schemes is ensure to that the moving cost of our solutions $\sum_{t=1}^T\dkt(\pi^t,\pi^{t-1})$ is approximately bounded by the moving cost $\sum_{t=1}^T\dfr(A^t,A^{t-1})$. Despite the fact that the connection between 
doubly stochastic matrices and permutations is quite well-studied
and there are various rounding schemes converting doubly stochastic matrices to probability distributions on permutations (such as the Birkhoff–von Neumann decomposition or the schemes of \cite{BGK10,SW11,BBFT20,FKKSV20}), using such schemes in a \textit{black-box} manner does not provide any kind of positive results for $\DSSC$. For example consider the case where $A^1 = \dots = A^T$ and thus $\sum_{t=1}^T \dfr(A^t,A^{t-1}) = \dfr(A^1,A^0)$. In case a randomized rounding scheme is applied \textit{independently to each $A^t$}, there always exists a positive probability that $\pi^t \neq \pi^{t-1}$ and thus the moving cost will far exceed $\dfr(A^1,A^0)$ as $T$ grows. The latter reveals the need for \textit{coupled rounding schemes} that convert the overall sequence of matrices $A^1,\ldots,A^T$ to a sequence of permutations $\pi^1,\ldots,\pi^T$. Such a rounding scheme is presented in Algorithm~\ref{alg:rand_rounding} and constitutes the back-bone of our approximation algorithm for requests of general cardinality.

\begin{algorithm}[ht]
  \caption{A Randomized Algorithm for $\DSSC$}\label{alg:rand_rounding}
  \textbf{Input:} A sequence of requests $R_1,\ldots,R_T$ and an initial permutation of the elmenents $\pi^0$.\\
  \textbf{Output:} A sequence of permutations $\pi^1,\ldots,\pi^T$.

 \begin{algorithmic}[1]
 \STATE Find the optimal solution $A^0=\pi^0,A^1,\ldots,A^T$ for $\mathrm{Fractional-MTF}$. 

     \FOR {each element $e \in U$}
        \STATE Select $\alpha_e$ uniformly at random in $[0,1]$.
        \ENDFOR

 \FOR {$t=1 \ldots T$ }
              
     \FOR {all elements $e \in U$}
                \STATE $I_e^t := \mathrm{argmin}_{1\leq i \leq n} \{ \log n \cdot \sum_{s=1}^i 
                A_{es}^t \geq \alpha_e\}$.
        \ENDFOR
            
            \STATE $\pi^t:=$ sort elements according to $I_e^t$ with ties being broken lexicographically.
\ENDFOR
  \end{algorithmic}
\end{algorithm}
The rounding scheme described in Algorithm~\ref{alg:rand_rounding}, imposes correlation between the different time-steps by simply requiring that each element $e$ selects $\alpha_e$ once and for all and by breaking ties lexicographically (any consistent tie-breaking rule would also work). In Lemma~\ref{l:rand_moving_cost} of Section~\ref{s:rand}, we show that no matter the sequence of doubly stochastic matrices, the rounding scheme of Algorithm~\ref{alg:rand_rounding} produces a sequence of permutations with overall moving cost at most $4\log^2 n$ the moving cost of the matrix-sequence\footnote{By omitting 
the $\log n$-multiplication step of Step~$7$, one could establish that the moving cost of the produced permutations is at most $4$ times the moving cost of the matrix-sequence, however omitting the $\log n$ multiplication could lead in prohibitively high covering cost.} and thus establishes that the overall moving cost of Algorithm~\ref{alg:rand_rounding} is bounded by $4\log^2 n \cdot \mathrm{OPT}_{\DSSC}$. The $\log n$ multiplication in Step~$7$ serves as a \textit{probability amplifier} ensuring that at least one element of $R_t$ lies in the relatively first positions of $\pi^t$ and permits us to approximately bound the covering cost $\sum_{t=1}^T \mathbb{E}\left[\pi^t(R_t) \right]$ by the covering cost of the optimal solution for $\DSSC$,
$\sum_{t=1}^T \pi_\mathrm{Opt}^t(R_t)$.
\begin{theorem}\label{t:rand}
Algorithm~\ref{alg:rand_rounding} is a $O(\log^2 n)$-approximation algorithm for $\DSSC$.
\end{theorem}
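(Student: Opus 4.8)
The plan is to bound separately the expected covering cost $\sum_{t=1}^{T}\mathbb{E}[\pi^t(R_t)]$ and the expected moving cost $\sum_{t=1}^{T}\mathbb{E}[\dkt(\pi^t,\pi^{t-1})]$ of the sequence $\pi^1,\dots,\pi^T$ produced by Algorithm~\ref{alg:rand_rounding}, and to show that the first is $O(\log n)\cdot\mathrm{OPT}_{\DSSC}$ while the second is $O(\log^2 n)\cdot\mathrm{OPT}_{\DSSC}$. Throughout I would work with the ``cumulative masses'' $F_e^t(i):=\sum_{s=1}^{i}A_{es}^t$, observing that $\{I_e^t\le i\}=\{\alpha_e\le\log n\cdot F_e^t(i)\}$ has probability $\min\{1,\log n\cdot F_e^t(i)\}$ since the $\alpha_e$ are independent uniform on $[0,1]$, and that, because the transportation LP of Definition~\ref{d:distance_lp} decouples over rows and a row transport on the line equals the integral of the absolute difference of cumulative masses, $\dfr(A^{t-1},A^t)=\sum_{e}\sum_{i=1}^{n-1}|F_e^{t-1}(i)-F_e^{t}(i)|$.

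\emph{Covering cost.} Using the Fractional-MTF constraint $\sum_{e\in R_t}A_{e1}^t=1$ together with $1-x\le e^{-x}$, the probability that no element of $R_t$ satisfies $I_e^t=1$ is $\prod_{e\in R_t}(1-\min\{1,\log n\cdot A_{e1}^t\})\le e^{-\log n}\le 1/n$; making this probability negligible is exactly the purpose of the $\log n$ amplification in Step~7. On the complementary event the first element of $R_t$ in $\pi^t$ has position at most $N^t:=|\{e:I_e^t=1\}|$, while always $\pi^t(R_t)\le n$, so $\mathbb{E}[\pi^t(R_t)]\le\mathbb{E}[N^t]+1$, and $\mathbb{E}[N^t]=\sum_e\min\{1,\log n\cdot A_{e1}^t\}\le\log n\sum_e A_{e1}^t=\log n$. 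Summing over $t$ and using $\mathrm{OPT}_{\DSSC}\ge\sum_t\pi^t_{\mathrm{Opt}}(R_t)\ge T$ gives $\sum_t\mathbb{E}[\pi^t(R_t)]\le(\log n+1)\cdot\mathrm{OPT}_{\DSSC}$.

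\emph{Moving cost.} I would first control a single element's displacement: for fixed $e$, the map $\alpha_e\mapsto I_e^t$ is the quantile function of the row distribution $A_e^t$ evaluated at level $\alpha_e/\log n$, which is uniform on $[0,1/\log n]$, so the classical one-dimensional identity (optimal transport cost $=\int|\text{cumulative-mass difference}|=\int|\text{quantile difference}|$) yields $\mathbb{E}_{\alpha_e}[|I_e^t-I_e^{t-1}|]\le\log n\sum_{i=1}^{n-1}|F_e^{t-1}(i)-F_e^{t}(i)|$, hence $\sum_e\mathbb{E}[|I_e^t-I_e^{t-1}|]\le\log n\cdot\dfr(A^{t-1},A^t)$. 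Then I would expand $\mathbb{E}[\dkt(\pi^t,\pi^{t-1})]=\tfrac12\sum_e\sum_{e'\ne e}\Pr[\text{relative order of }e,e'\text{ differs in }\pi^{t-1},\pi^t]$: conditioning on $\alpha_e$ (which fixes $I_e^{t-1},I_e^t$) and integrating over the independent $\alpha_{e'}$, the relative order changes only if $\alpha_{e'}$ lies in two disjoint intervals of total probability at most $\log n\,|F_{e'}^{t}(I_e^t-1)-F_{e'}^{t-1}(I_e^{t-1}-1)|$, which by the triangle inequality on the two cumulative masses is $\le\log n\,|F_{e'}^t(I_e^t-1)-F_{e'}^{t-1}(I_e^t-1)|+\log n\,|F_{e'}^{t-1}(I_e^t-1)-F_{e'}^{t-1}(I_e^{t-1}-1)|$. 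Summed over $e'$ the second term equals $\log n\,|I_e^t-I_e^{t-1}|$ (columns of $A^{t-1}$ sum to $1$), so summing over $e$ and using the single-element bound its contribution is at most $\log^2 n\cdot\dfr(A^{t-1},A^t)$; for the first term, exchanging the order of summation and using $\sum_e\Pr[I_e^t=i]\le\log n\sum_e A_{ei}^t=\log n$ bounds its contribution by $\log^2 n\cdot\dfr(A^{t-1},A^t)$ as well. Hence $\mathbb{E}[\dkt(\pi^t,\pi^{t-1})]\le\tfrac12\bigl(\log^2 n\cdot\dfr(A^{t-1},A^t)+\log^2 n\cdot\dfr(A^{t-1},A^t)\bigr)=\log^2 n\cdot\dfr(A^{t-1},A^t)$, which is the content of Lemma~\ref{l:rand_moving_cost}; summing over $t$ and applying Lemma~\ref{l:relax} bounds the total expected moving cost by $4\log^2 n\cdot\mathrm{OPT}_{\DSSC}$. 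Adding the two bounds gives the $O(\log^2 n)$ approximation.

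I expect the moving-cost step, and within it the ``congestion'' created by several elements being rounded to the same virtual position, to be the main obstacle: it can only be controlled through the expected-load bound $\sum_e\Pr[I_e^t=i]\le\log n$, and it is precisely this $\log n$, compounded with the $\log n$ that the amplified cumulative masses already contribute to the interval lengths, that turns the Step~7 scaling into a $\log^2 n$ (rather than $\log n$) overhead in the moving cost. Without the amplification the same computation leaves only a constant factor, consistent with the footnote, and this quadratic loss is morally the same ``$2\times 2$'' phenomenon as the tight factor $4$ of static Min-Sum Set Cover.
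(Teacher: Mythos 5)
Your proposal is correct, and while the covering-cost part follows the paper's argument (amplified first-column mass, $1-x\le e^{-x}$, failure probability $\le 1/n$ — the paper additionally notes that only elements of $R_t$ can have $I_e^t=1$, giving a per-step covering cost of $2\le 2\pi^t_{\mathrm{Opt}}(R_t)$, whereas your $\mathbb{E}[N^t]\le\log n$ bound is lossier but still suffices via $\mathrm{OPT}_{\DSSC}\ge T$), your proof of the moving-cost bound, i.e.\ of Lemma~\ref{l:rand_moving_cost}, is genuinely different. The paper never touches pairwise inversion probabilities directly for general matrices: it decomposes the transition from $A^{t-1}$ to $A^t$ into a chain of \emph{neighboring} stochastic matrices with column sums at most $2$ whose footrule lengths add up to $\dfr(A^{t-1},A^t)$ (Claim~\ref{c:neighboring_sequence}), proves the $4\log^2 n$ bound only in the elementary case of two neighboring matrices, where just one element can change its index and its displacement is controlled by the occupancies $|O_{i^\star}|,|O_{i^\star+1}|$ (Lemma~\ref{l:neigh}), and then chains by the triangle inequality. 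You instead bound each pair's flip probability directly by conditioning on $\alpha_e$, writing the flip event as an interval for $\alpha_{e'}$ of length at most $\log n\,|F_{e'}^t(\cdot)-F_{e'}^{t-1}(\cdot)|$, and splitting by a triangle inequality into a congestion term (controlled by the expected column load $\sum_e\Pr[I_e^t=i]\le\log n$) and a displacement term (controlled by the quantile-coupling/Wasserstein identity $\sum_e\mathbb{E}|I_e^t-I_e^{t-1}|\le\log n\cdot\dfr(A^{t-1},A^t)$, exploiting that all $I_e^s$ are driven by the same $\alpha_e$). Your route avoids the neighboring-matrix machinery entirely, works in one shot for arbitrary doubly stochastic matrices, and even yields the cleaner constant $\log^2 n$ in place of $4\log^2 n$; the paper's route isolates the randomness into a one-entry perturbation, which makes the case analysis very short at the price of the decomposition claim. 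Two cosmetic points: the offset in your threshold should be $F_{e'}^s(I_e^s)$ or $F_{e'}^s(I_e^s-1)$ according to whether $e'$ precedes $e$ lexicographically (the same offset at both times, so nothing downstream changes), and the flip event is a single interval (a symmetric difference of nested intervals) rather than two; neither affects the bound.
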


\noindent Despite the fact that in Step~$7$ of Algorithm~\ref{alg:rand_rounding}, we multiply the entries of $A^t$ with $\log n$ the overall guarantee is $O(\log^2 n)$. At a first glance the latter seems quite strange but admits a rather natural explanation. For most of the positions $i$, the probability that an element $e$ admits index $I_e^t = i$ is roughly $\log n \cdot A_{ei}^t$, but due to the fact each index $j \leq i$ is on expectation  selected by $\log n$ other elements, the expected position of $e$ in the produced permutation is roughly $ \log^2 n$ times the expected value of
$\mathrm{argmin}_{1\leq i \leq n} \{ \sum_{s=1}^i A_{es}^t \geq \alpha_e\}$. This phenomenon relates with the elegant fitting argument given in \cite{FLT04} to prove that the greedy algorithm is $4$-approximation for the original \textit{Min-Sum Set Cover} (which is tight unless $\mathrm{P}= \mathrm{NP}$). 
The latter makes us conjecture that the tight inapproximability bound for $\DSSC$ is $\Omega(\log^2 n)$ for requests of general cardinality. 

Motivated by the $r$-approximation LP-based algorithm for instances of 
$\mathrm{Set-Cover}$ in which elements belong in at most $r$ sets, we examine whether the $O(\log^2 n)$ for $\DSSC$
can be ameliorated in case of $r$-bounded request sequences. Interestingly, the simple \textit{greedy rounding} scheme (described\footnote{Step~$3$ of Algorithm~\ref{alg:greedy_rounding} is well-defined since $|R_t| \leq r$ and $\sum_{e \in R_t}A_{e1}^t = 1$.} in Algorithm~\ref{alg:greedy_rounding}) provides such a $O(r^2)$-approximation algorithm.
\begin{algorithm}[H]
  \caption{A Greedy-Rounding Algorithm for $\DSSC$ for $r$-Bounded Sequences.}\label{alg:greedy_rounding}
  \textbf{Input:} A request sequence $R_1,\ldots,R_T$ with $|R_t| \leq r$ and an initial permutation $\pi^0$.\\
  \textbf{Output:} A sequence of permutations $\pi^1,\ldots,\pi^T$.

 \begin{algorithmic}[1]
 \STATE Find the optimal solution $A^0=\pi^0,A^1,\ldots,A^T$ for $\mathrm{Fractional-MTF}$.

 \FOR {$t=1 \ldots T$ }
              
        \STATE $\pi^t:=$ in $\pi^{t-1}$, move to the first position an element $e \in R_t$
        such that $A_{e1}^t \geq 1/r$
\ENDFOR
  \end{algorithmic}
\end{algorithm}

\noindent The $O(r^2)$-approximation guarantee of Algorithm~\ref{alg:greedy_moving} is formally stated and proven in Theorem~\ref{t:greedy}. The main technical challenge is that we cannot directly compare the moving cost of Algorithm~\ref{alg:greedy_rounding} with $\sum_{t=1}^T \dfr(A^t,A^{t-1})$ and thus we deploy a two-step detour.

In the first step (Lemma~\ref{l:upper_bound_r}), we prove the existence of a 
sequence of doubly stochastic matrices $\hat{A}^0 =\pi^0,\hat{A}^1,\ldots,\hat{A}^T$ for which each $\hat{A}^t$ satisfies that $\textbf{(i)}$ its entries of  are \textbf{multiples of $1/r$}, $\textbf{(ii)}$ $\hat{A}^t_{e_t 1} \geq 1/r$ where $e_t$ is the element that Algorithm~\ref{alg:greedy_rounding} moves to the first position at round $t$, and $\textbf{(iii)}$ the sequence
$\hat{A}^0 =\pi^0,\hat{A}^1,\ldots,\hat{A}^T$ admits moving cost at most 
$\sum_{t=1}^T \mathrm{d}_{\mathrm{FR}}(A^t,A^{t-1})$. In order to establish the existence of such a sequence, we construct an appropriate linear program (see Definition~\ref{def:lp_2}) based on the elements 
that Algorithm~\ref{alg:greedy_rounding} moves to the first position at each round and prove that it admits an optimal solution with values being multiples of $1/r$. To do the latter, we
relate the linear program of Definition~\ref{def:lp_2} with a fractional version of the $k$-$\mathrm{Paging}$ \cite{BBN12} problem and based on the optimal eviction policy
(\textit{evict the page appearing the furthest in the future}), we design an algorithm producing optimal solutions for the LP with values being multiple of $1/r$.

In the second step (Lemma~\ref{l:r_integral}), we show that for any sequence $\hat{A}^0 =\pi^0,\hat{A}^1,\ldots,\hat{A}^T$ satisfying properties~\textbf{(i)} and~\textbf{(ii)}, the moving cost of Algorithm~\ref{alg:greedy_rounding} is at most $O(r^2) \cdot \sum_{t=1}^T \dfr(\hat{A}^t,\hat{A}^{t-1})$. The latter is achieved through the use of an appropriate \textit{potential function} based on a generalization of Kendall-Tau distance to  
doubly stochastic matrices with entries being multiples
of $1/r$ (see Definition~\ref{d:frac_KT}).
\begin{theorem}\label{t:greedy}
Algorithm~\ref{alg:greedy_rounding}
is a $O(r^2)$-approximation algorithm for $\DSSC$.
\end{theorem}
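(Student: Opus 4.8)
The plan is to handle the covering cost and the moving cost of Algorithm~\ref{alg:greedy_rounding} separately. The covering cost is immediate: by construction Algorithm~\ref{alg:greedy_rounding} places at every round an element of $R_t$ at the first position of $\pi^t$, hence $\sum_{t=1}^T \pi^t(R_t) = T \le \mathrm{OPT}_{\DSSC}$, since every solution pays at least $1$ per request. So the whole task is to show that the moving cost $\sum_{t=1}^T \dkt(\pi^t,\pi^{t-1})$ of Algorithm~\ref{alg:greedy_rounding} is $O(r^2)\cdot\mathrm{OPT}_{\DSSC}$. By Lemma~\ref{l:relax} it suffices to compare it with $\sum_{t=1}^T \dfr(A^t,A^{t-1})$, where $A^0=\pi^0,A^1,\dots,A^T$ is the optimal solution of Fractional-MTF. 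We cannot compare the two directly --- a fractional solution can shuffle arbitrarily small masses at a cost no integral permutation can match --- so I would pass through an intermediate $1/r$-granular matrix sequence, in two steps.

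First (Lemma~\ref{l:upper_bound_r}), I would establish the existence of a sequence of doubly stochastic matrices $\hat A^0=\pi^0,\hat A^1,\dots,\hat A^T$ such that (i) all entries of each $\hat A^t$ are multiples of $1/r$; (ii) $\hat A^t_{e_t 1}\ge 1/r$, where $e_t\in R_t$ is the element Algorithm~\ref{alg:greedy_rounding} moves to the front at round $t$ (it exists and satisfies $A^t_{e_t1}\ge1/r$ by pigeonhole, since $|R_t|\le r$ and $\sum_{e\in R_t}A^t_{e1}=1$); and (iii) $\sum_{t=1}^T\dfr(\hat A^t,\hat A^{t-1})\le \sum_{t=1}^T\dfr(A^t,A^{t-1})$. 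The approach is to set up a linear program (Definition~\ref{def:lp_2}) over such matrix sequences with the lower bounds $\hat A^t_{e_t1}\ge1/r$ built in, observe that the Fractional-MTF optimum is feasible for it (so its optimum is at most $\sum_t\dfr(A^t,A^{t-1})$), and then argue that this LP has an optimal solution in which every entry is a multiple of $1/r$. For the last point I would show that an optimal schedule need only ever modify the first column of the matrices (rearranging the displaced mass to keep double stochasticity), which turns the question into a fractional $k$-paging-type problem --- the first position being a cache of $r$ unit-$1/r$ slots fed by the request sequence $e_1,\dots,e_T$ --- whose furthest-in-the-future eviction policy is optimal and is realized entirely by moves of mass $1/r$. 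This is the step I expect to be the main obstacle: both the reduction to the paging instance and the claim that confining changes to column $1$ loses nothing require care.

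Second (Lemma~\ref{l:r_integral}), I would show that for any sequence $\hat A^0=\pi^0,\hat A^1,\dots,\hat A^T$ satisfying (i) and (ii), the moving cost of Algorithm~\ref{alg:greedy_rounding} is $O(r^2)\cdot\sum_{t=1}^T\dfr(\hat A^t,\hat A^{t-1})$. This I would prove by an amortized argument with a potential function equal to a generalization of the Kendall--Tau distance (Definition~\ref{d:frac_KT}) between the integral permutation $\pi^t$ maintained by the algorithm and the fractional matrix $\hat A^t$: think of every element as $r$ tokens of mass $1/r$ and count, with appropriate scaling, the token pairs ordered oppositely in $\pi^t$ and in $\hat A^t$. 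The round-by-round claim is that the $\dkt$-cost Algorithm~\ref{alg:greedy_rounding} pays to bring $e_t$ to the front is covered by the drop in this potential plus an additive $O(r^2)\dfr(\hat A^{t-1},\hat A^t)$ term accounting for the movement of $\hat A$ itself; summing telescopes. The quadratic dependence on $r$ enters exactly here: moving $1/r$ of $e_t$'s mass from depth $\approx k$ to the front costs the fractional sequence only $\approx (k-1)/r$ in $\dfr$ while Algorithm~\ref{alg:greedy_rounding} must transport the whole element and pay $\approx k$ in $\dkt$ --- one factor of $r$ --- and translating between footrule and Kendall--Tau on the token representation costs another $\Theta(r)$.

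Combining the three bounds, the moving cost of Algorithm~\ref{alg:greedy_rounding} is at most $O(r^2)\sum_{t}\dfr(\hat A^t,\hat A^{t-1})\le O(r^2)\sum_t\dfr(A^t,A^{t-1})\le O(r^2)\cdot 4\,\mathrm{OPT}_{\DSSC}$, and adding the covering cost $T\le\mathrm{OPT}_{\DSSC}$ yields total cost $O(r^2)\cdot\mathrm{OPT}_{\DSSC}$, as claimed.
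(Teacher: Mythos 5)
Your proposal follows essentially the same route as the paper: bounding the covering cost by $T\le\mathrm{OPT}_{\DSSC}$, passing through Lemma~\ref{l:relax}, the LP of Definition~\ref{def:lp_2} with its $1/r$-granular optimal solution obtained via the furthest-in-the-future paging argument (Lemma~\ref{l:upper_bound_r}), and the potential-function comparison with the fractional Kendall--Tau distance (Lemma~\ref{l:r_integral}). The only cosmetic difference is that the paper's per-round potential inequality carries an explicit additive $r$ term, giving $\sum_t \dkt(\pi^t,\pi^{t-1}) \le 2r^2\sum_t\dfr(\hat A^t,\hat A^{t-1}) + rT$, which your sketch absorbs implicitly and which is harmless since $rT\le r\cdot\mathrm{OPT}_{\DSSC}$.
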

\noindent In Section~\ref{s:rand}~and~\ref{s:greedy} we provide the basic steps and ideas in the proof of Theorem~\ref{t:rand}~and~\ref{t:greedy} respectively.

\section{Proof of Theorem~\ref{t:rand}}\label{s:rand}
The basic step towards the proof of Theorem~\ref{t:rand} is Lemma~\ref{l:rand_moving_cost}, establishing the fact that once two doubly stochastic matrices are given as input to the randomized rounding of Algorithm~\ref{alg:rand_rounding}, the expected distance of the produced permutations is approximately bounded by the distance of the respective doubly stochastic matrices.
\begin{lemma}\label{l:rand_moving_cost}
Let the doubly stochastic matrices $A,B$ given as input to the
rounding scheme of
Algorithm~\ref{alg:rand_rounding}. Then for the produced permutations
$\pi^A,\pi^B$, 
$
\mathbb{E}\left[ \dkt(\pi^A,\pi^B) \right] \leq 4\log^2 n \cdot \dfr(A,B)$.
\end{lemma}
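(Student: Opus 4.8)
The plan is to relax from Kendall--tau to footrule on the two output permutations and then to control the footrule distance element by element. Since $\dkt(\pi^A,\pi^B)\le \dfr(\pi^A,\pi^B)=\sum_{e\in U}\bigl|\mathrm{Pos}(e,\pi^A)-\mathrm{Pos}(e,\pi^B)\bigr|$, by linearity of expectation it suffices to bound $\sum_{e\in U}\mathbb{E}\bigl[\,|\mathrm{Pos}(e,\pi^A)-\mathrm{Pos}(e,\pi^B)|\,\bigr]$. Write $I_e$ and $J_e$ for the indices that Step~7 assigns to $e$ from $A$ and from $B$, and set $F^A_e(k)=\sum_{s=1}^{k}A_{es}$ and $\hat F^A_e(k)=\min\{1,\,\log n\cdot F^A_e(k)\}$ (similarly for $B$), so that $I_e\le k\iff \alpha_e\le \hat F^A_e(k)$. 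Now $\mathrm{Pos}(e,\pi^A)=1+\sum_{e'\ne e}\mathbf{1}[e'\text{ precedes }e\text{ in }\pi^A]$, and once $\alpha_e$ (hence $I_e$) is fixed, the event ``$e'$ precedes $e$ in $\pi^A$'' is a threshold event on the independent, uniform variable $\alpha_{e'}$, namely $\alpha_{e'}\le c_A(e,e')$ where $c_A(e,e')=\hat F^A_{e'}(I_e)$ if $e'<e$ and $c_A(e,e')=\hat F^A_{e'}(I_e-1)$ if $e'>e$ (and symmetrically for $c_B(e,e')$ using $J_e$). Hence $|\mathrm{Pos}(e,\pi^A)-\mathrm{Pos}(e,\pi^B)|\le\sum_{e'\ne e}\bigl|\mathbf{1}[\alpha_{e'}\le c_A(e,e')]-\mathbf{1}[\alpha_{e'}\le c_B(e,e')]\bigr|$, whose conditional expectation given $\alpha_e$ equals $\sum_{e'\ne e}|c_A(e,e')-c_B(e,e')|$.

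I would then split each term by the triangle inequality, $|c_A(e,e')-c_B(e,e')|\le \bigl|\hat F^A_{e'}(I_e)-\hat F^B_{e'}(I_e)\bigr|+\bigl|\hat F^B_{e'}(I_e)-\hat F^B_{e'}(J_e)\bigr|$ (ignoring the harmless $\pm1$ shift of the index), into an ``$e'$-part'' and an ``$e$-part''. For the $e$-part, summing over $e'$ uses that the columns of $B$ sum to $1$: $\sum_{e'}\bigl|\hat F^B_{e'}(I_e)-\hat F^B_{e'}(J_e)\bigr|\le \log n\cdot|I_e-J_e|$, since for every position $s$, $\sum_{e'}\bigl(\hat F^B_{e'}(s)-\hat F^B_{e'}(s-1)\bigr)\le \log n\sum_{e'}B_{e's}=\log n$. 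Taking expectation over $\alpha_e$, the pair $(I_e,J_e)$ is exactly the monotone (quantile) coupling of the distributions with CDFs $\hat F^A_e,\hat F^B_e$ driven by the common $\alpha_e$, so by the standard one-dimensional transport formula $\mathbb{E}_{\alpha_e}|I_e-J_e|=\sum_k\bigl|\hat F^A_e(k)-\hat F^B_e(k)\bigr|\le \log n\sum_k|F^A_e(k)-F^B_e(k)|$, using that $x\mapsto\min\{1,\log n\cdot x\}$ is $\log n$-Lipschitz. Since $\sum_e\sum_k|F^A_e(k)-F^B_e(k)|=\dfr(A,B)$, the $e$-part contributes at most $\log^2 n\cdot\dfr(A,B)$ in total.

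For the $e'$-part, set $G(k)=\sum_{e'}\bigl|\hat F^A_{e'}(k)-\hat F^B_{e'}(k)\bigr|$, which is deterministic and, by the same Lipschitz bound, satisfies $G(k)\le \log n\sum_{e'}|F^A_{e'}(k)-F^B_{e'}(k)|$; the crucial point is that $\sum_{e'}|F^A_{e'}(k)-F^B_{e'}(k)|$ at a \emph{fixed} $k$ is a single summand of $\sum_k\sum_{e'}|F^A_{e'}(k)-F^B_{e'}(k)|=\dfr(A,B)$, so $\sum_k G(k)\le \log n\cdot\dfr(A,B)$. Summing the $e'$-part over $e$ and taking expectations, $\sum_e\mathbb{E}_{\alpha_e}[G(I_e)]=\sum_k\bigl(\sum_e\Pr[I_e=k]\bigr)G(k)$, and $\sum_e\Pr[I_e=k]=\sum_e\bigl(\hat F^A_e(k)-\hat F^A_e(k-1)\bigr)\le \log n\sum_e A_{ek}=\log n$, so the $e'$-part contributes at most $\log n\cdot\sum_k G(k)\le \log^2 n\cdot\dfr(A,B)$. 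Adding the two parts and invoking $\dkt(\pi^A,\pi^B)\le\dfr(\pi^A,\pi^B)$ gives $\mathbb{E}[\dkt(\pi^A,\pi^B)]\le 2\log^2 n\cdot\dfr(A,B)$, a shade stronger than the claimed $4\log^2n$, the remaining slack absorbing the index shifts and the passage through the footrule inequalities.

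The hard part is the second logarithmic factor and pinpointing where it arises. One effect is that the $\log n$ amplification in Step~7 can make a single element's index $I_e$ move up to $\Theta(\log n)$ times farther than it would without scaling (quantified by $\min\{1,\log n\cdot x\}$ being $\log n$-Lipschitz). A second, independent effect is that after scaling each index value is shared in expectation by $\Theta(\log n)$ elements (because columns sum to $1$, $\sum_e\Pr[I_e=k]\le\log n$), so converting index values to actual positions in $\pi^A$ multiplies displacements by another $\Theta(\log n)$. The bookkeeping above is tailored so that these two effects add rather than compound uncontrollably; the one genuinely nonobvious observation is that $\sum_{e'}|F^A_{e'}(k)-F^B_{e'}(k)|$ for a fixed position $k$ is merely one term of the double sum defining $\dfr(A,B)$, so that summing over all positions $k$ reconstitutes exactly $\dfr(A,B)$ instead of blowing up by a factor of $n$.
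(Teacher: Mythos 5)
Your proof is correct, but it follows a genuinely different route from the paper. The paper first decomposes the move from $A$ to $B$ into a chain of \emph{neighboring} stochastic matrices (each differing in two consecutive entries of a single row, with column sums at most $2$), proves the bound for a single neighboring pair by arguing that only one element changes its index (with probability at most $\log n\cdot\epsilon$) and, when it does, is displaced by at most the number of elements occupying the two adjacent index values (each $O(\log n)$ in expectation), and then sums over the chain via the triangle inequality and the fact that the chain's total footrule cost equals $\dfr(A,B)$. You instead give a direct global argument: relax $\dkt$ to the footrule distance on the output permutations, express position differences through threshold events on the independent uniforms $\alpha_{e'}$, and split into an ``$e$-part'' controlled by the quantile coupling identity $\mathbb{E}|I_e-J_e|=\sum_k|\hat F^A_e(k)-\hat F^B_e(k)|$ and an ``$e'$-part'' controlled by $\sum_e\Pr[I_e=k]\le\log n$; both parts are then tied back to $\dfr(A,B)$ through the CDF (Wasserstein-$1$) representation $\dfr(A,B)=\sum_e\sum_k|F^A_e(k)-F^B_e(k)|$ of the row-wise transportation LP of Definition~\ref{d:distance_lp} --- a fact worth stating explicitly, though it is standard. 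Your approach avoids the neighboring-matrix machinery altogether and cleanly separates the two sources of the $\log n$ losses, which is a nice structural insight; the paper's approach localizes all randomness changes to a single element per step, which makes each step elementary at the cost of the chain construction. Two small points of bookkeeping: the ``harmless $\pm1$ shift'' coming from the lexicographic tie-breaking (the threshold is $\hat F^A_{e'}(I_e-1)$ when $e'>e$) does cost up to a factor $2$ in each of your two parts if handled honestly, so the rigorous version of your computation yields $4\log^2 n\cdot\dfr(A,B)$ rather than $2\log^2 n\cdot\dfr(A,B)$ --- still exactly the claimed bound; and your conditional-expectation step should note that, given $\alpha_e$, the thresholds $c_A(e,e'),c_B(e,e')$ are deterministic while the $\alpha_{e'}$ remain uniform and independent, which is what justifies $\mathbb{E}\bigl[|\mathbf{1}[\alpha_{e'}\le c_A]-\mathbf{1}[\alpha_{e'}\le c_B]|\,\big|\,\alpha_e\bigr]=|c_A-c_B|$.
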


\noindent Before exhibiting the proof of Lemma~\ref{l:rand_moving_cost} we introduce the notion of \textit{neighboring matrices}.
\begin{definition}(Neighboring stochastic matrices)
The stochastic matrices $A,B$ are neighboring if and only if they differ in exactly two entries lying on the same row and on consecutive columns.
\end{definition}
\begin{example}
Let $A = 
\begin{pmatrix}
1 & 0 & 0 \\
0 & 1 & 0 \\
0 & 0 & 1
\end{pmatrix}$, $B = 
\begin{pmatrix}
1/2 & 1/2 & 0 \\
0 & 1 & 0 \\
0 & 0 & 1
\end{pmatrix}$ and $C = 
\begin{pmatrix}
1 & 0 & 0 \\
1 & 0 & 0 \\
0 & 0 & 1
\end{pmatrix}$. The pair of matrices
$(A,B)$ and $(A,C)$ are neighboring while $(B,C)$ are not.
\end{example}

\noindent Any doubly stochastic matrix $A$ can be converted to another doubly stochastic matrix $B$ through an intermediate sequence of neighboring stochastic matrices all of which are \textit{almost doubly stochastic}
and their overall moving cost equals $\mathrm{d}_{\mathrm{FR}}(A,B)$.
\begin{claim}\label{c:neighboring_sequence}
Given the doubly stochastic matrices $A,B$, there exists a finite sequence of stochastic matrices, $A^0,\ldots,A^T$ such that
\begin{enumerate}
    \item $A^0= A$ and $A^T = B$.
    
    \item $A^t$ and $A^{t-1}$ are neighboring.
    
    \item the column-sum is bounded by $2$, $\sum_{e \in U} A_{ei}^t \leq 2$ for all $1\leq i \leq n$.
    
    \item $\sum_{t=1}^T \dfr(A^t, A^{t-1}) = \dfr(A,B)$.
\end{enumerate}
\end{claim}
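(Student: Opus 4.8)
The plan is to build the sequence out of \emph{elementary moves}: a single elementary move takes some amount $\delta\ge 0$ of mass inside one row $e$ and shifts it from column $k$ to column $k+1$ (or vice versa). Applied to a stochastic matrix this produces a neighboring stochastic matrix, and the $\dfr$-distance between the two is exactly $\delta$: a net amount $\delta$ must cross the gap between columns $k$ and $k+1$, so no transport is cheaper, while shifting exactly this mass by one column attains cost $\delta$. Hence, for a sequence $A^0,\dots,A^T$ obtained by elementary moves of masses $\delta_1,\dots,\delta_T$ one has $\sum_t \dfr(A^t,A^{t-1})=\sum_t \delta_t$, the total transported mass. So the whole task is to route the transformation from $A$ to $B$ by elementary moves whose masses sum to $\dfr(A,B)$, in such an order that no intermediate matrix ever has a column sum above $2$ (row sums automatically stay $1$).

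Since the linear program of Definition~\ref{d:distance_lp} separates over rows, $\dfr(A,B)=\sum_{e\in U}\dfr(A_e,B_e)$, and for a single row $\dfr(A_e,B_e)$ is the one-dimensional transportation cost between the distributions $A_e$ and $B_e$ on $\{1,\dots,n\}$, equal to $\sum_{k=1}^{n-1}\bigl|A_e^{(k)}-B_e^{(k)}\bigr|$ where $A_e^{(k)}=\sum_{i\le k}A_{ei}$. I would fix, for each row $e$, an optimal transport plan $f^e$ with at most $2n-1$ nonzero shipments, and realize each shipment (a fixed mass moved monotonically from one column to another) as the corresponding chain of single-column steps. This yields, for each row, a finite list of elementary moves whose masses sum to $\dfr(A_e,B_e)$, and since this chain realization of an optimal plan has total cost equal to the plan's cost, the grand total over all rows is exactly $\dfr(A,B)$. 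This already secures finiteness (at most $O(n^3)$ moves) together with properties~1,~2 and~4; what remains is to \emph{interleave} the per-row move lists into one global order so that property~3 holds.

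For the interleaving I would split each row's list into a \emph{leftward} part (the steps carrying mass toward smaller columns, i.e.\ across gaps $k$ with $B_e^{(k)}>A_e^{(k)}$) and a \emph{rightward} part, perform all leftward parts before all rightward parts, and inside a part advance the gaps in the order that makes the mass available (gap $k+1$ before gap $k$ in the leftward part, the reverse in the rightward part). The quantitative heart of the matter is that the total mass crossing a fixed gap $k$ in the leftward part equals the total crossing it in the rightward part — call it $\Phi_k=\sum_e\max\bigl(A_e^{(k)}-B_e^{(k)},0\bigr)$ — and that $\Phi_k-\Phi_{k-1}\le\sum_e\max(A_{ek}-B_{ek},0)\le 1$. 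The sum of column $k$ changes only when mass crosses gap $k-1$ or gap $k$, so advancing these two adjacent gaps in lockstep and using $\Phi_k-\Phi_{k-1}\le 1$ keeps that sum at most $1+(\text{fraction of gap $k$ done})\cdot(\Phi_k-\Phi_{k-1})\le 2$; this is exactly the structural reason the constant in the claim is $2$.

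The step I expect to be the main obstacle is making this global schedule simultaneously finite and column-sum-bounded, because the two demands pull against each other: mass-availability wants a gap advanced only once its neighbour (to the right, in the leftward part) is sufficiently far along, whereas the bound $\le 2$ is threatened precisely when a gap has been pushed far ahead of that neighbour. Resolving this requires, at each stage and for each pair of adjacent gaps, deciding which of the two to move next — advancing the ``incoming'' gap first while the column between them is still comfortably below $2$, and switching the order as that column approaches $2$ (checking that enough mass has by then accumulated for the move to be legal) — and then arguing that these local decisions are mutually consistent and that finitely many stages suffice. I expect most of the proof's length to be exactly this scheduling bookkeeping.
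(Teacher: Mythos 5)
Your reduction to single-column elementary moves, the cost accounting (each neighboring step costs exactly the shifted mass, so any realization of an optimal per-row transport plan by unit-column steps has total cost $\dfr(A,B)$), and the observation that rightward and leftward crossings of every gap balance because $A,B$ are doubly stochastic are all sound, and properties 1, 2 and 4 follow as you say. The genuine gap is property 3: you never actually construct a schedule that keeps every column sum at most $2$. The ordering you first state -- within the leftward phase, finish gap $k+1$ before gap $k$ -- is not just unproven but false for the bound: right after all leftward mass has crossed gap $k$ and none has crossed gap $k-1$, column $k$ holds $1+\Phi_k$, and $\Phi_k$ can be of order $n$ (e.g.\ $A$ the identity permutation and $B$ its reversal). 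Your ``lockstep'' computation $1+\alpha(\Phi_k-\Phi_{k-1})\le 2$ is the right idea, but it contradicts that ordering, presumes all gaps are kept at a common completion fraction, and still needs a per-row (not merely aggregate) availability argument plus a discretization into finitely many moves that does not overshoot $2$; you explicitly defer exactly this ``scheduling bookkeeping,'' which is the entire content of the claim beyond the easy parts. (The tension is in fact resolvable -- per row one has $\ell^e_{k-1}\le a^e_k+\ell^e_k$, so a proportional schedule is feasible -- but none of this is in your write-up.)

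The paper avoids the global scheduling problem altogether by cancelling crossing flows in pairs: if $A\neq B$, some gap is crossed by a positive rightward shipment in one row and, by double stochasticity, an equal positive leftward shipment in another row; shifting $\epsilon=\min$ of the two first rightward along one row, column by column, and then leftward along the other row restores a doubly stochastic matrix $A'$ with $\dfr(A,B)=\dfr(A,A')+\dfr(A',B)$, while any intermediate column sum exceeds $1$ by at most $\epsilon\le 1$. Because each such round ends back at a doubly stochastic matrix, the column-sum bound is local to a single pair of shipments and the claim follows by a trivial induction -- precisely the invariant your all-leftward-then-all-rightward schedule forgoes and would have to re-establish by hand. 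To complete your proof you would either need to spell out and verify the lockstep schedule (rates, per-row availability, discretization, termination), or switch to a pairwise-cancellation argument of this kind.
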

\begin{proof}[Proof Sketch of Claim~\ref{c:neighboring_sequence}]
Let $f_{ij}^e$ denotes the optimal solution of the linear program of Definition~\ref{d:distance_lp} defining the FootRule distance $\dfr(A,B)$.  
In case $A\neq B$, there exist elements $e_1 , e_2$ and indices $i<j$ such that $f_{i\ell(i)}^{e_1} > 0$ and $f_{j\ell(j)}^{e_2} > 0$ with $\ell(i) >= j$ and $\ell(j) <=i$.\\

\noindent Let $\epsilon = \min(f_{i\ell(i)}^{e_1}, f_{j\ell(j)}^{e_2})$ and consider the sequence of the $|i-j|$ matrices produced by moving $\epsilon$ amount of mass in row $e_1$ from column $i$ to column $j$. Then consider the sequence of the $|i-j|$ matrices produced by moving $\epsilon$ amount of mass in the row $e_2$ from column $j$ to column $i$.\\

\noindent In the overall sequence of $2|i-j|$ stochastic matrices, two consecutive matrices are \textit{neighboring}. Furthermore the column-sum of the matrices does not exceed $1 + \epsilon \leq 2$ and the final matrix $A'$ of the sequence is doubly stochastic. Moreover by the fact that $t(i) \geq j$ and $t(j) \leq i$ we get that the overall moving cost of the sequence equals $\dfr(A,A')$ and that 
$\dfr(A,B) = \dfr(A,A') + \dfr(A',B)$. Applying the same argument inductively, until we reach matrix $B$, proves Claim~\ref{c:neighboring_sequence}.
\end{proof}

\begin{example}
Let the doubly stochastic matrices $A = 
\begin{pmatrix}
1 & 0 & 0 \\
0 & 1 & 0 \\
0 & 0 & 1
\end{pmatrix}$, $B = 
\begin{pmatrix}
0 & 0 & 1 \\
1/2 & 1/2 & 0 \\
1/2 & 1/2 & 0
\end{pmatrix}$. $A$ can be converted to $B$ with the following sequence neighboring stochastic matrices, 
\smallskip
$\begin{pmatrix}
0 & 1 & 0 \\
0 & 1 & 0 \\
0 & 0 & 1
\end{pmatrix}$,
$\begin{pmatrix}
0 & 0 & 1 \\
0 & 1 & 0 \\
0 & 0 & 1
\end{pmatrix}$,
$\begin{pmatrix}
0 & 0 & 1 \\
0 & 1 & 0 \\
0 & 1 & 0
\end{pmatrix}$,
$\begin{pmatrix}
0 & 0 & 1 \\
1/2 & 1/2 & 0 \\
0 & 1 & 0
\end{pmatrix}$,
$\begin{pmatrix}
0 & 0 & 1 \\
1/2 & 1/2 & 0 \\
1/2 & 1/2 & 0
\end{pmatrix}$.
\smallskip
Notice that the above sequence satisfies all the $4$ requirements of Claim~\ref{c:neighboring_sequence}.
\end{example}

\noindent The notion of neighboring matrices is rather helpful since Lemma~\ref{l:rand_moving_cost} admits a fairly simple proof in case $A,B$ are neighboring stochastic matrices (notice
that the rounding scheme of
Algorithm~\ref{alg:rand_rounding} is well-defined even for stochastic matrices). The latter is formally stated and proven in Lemma~\ref{l:neigh} and is the main technical claim of the section.

\begin{lemma}\label{l:neigh}
Let $\pi^A,\pi^B$ the permutations produced by the rounding scheme of
Algorithm~\ref{alg:rand_rounding} (given as input) the stochastic matrices $A,B$ that 
\textbf{i)} are neighboring \textbf{ii)} their column-sum is bounded by $2$, then
$\mathbb E[\dkt(\pi^A,\pi^B)] \leq 4 \log^2 n \cdot \dfr(A,B)$
\end{lemma}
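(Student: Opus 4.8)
}
The plan is to reduce everything to tracking the single row in which $A$ and $B$ differ. Since $A$ and $B$ are neighboring there is a row $e^\ast$ and a column $k$ with $A_{e^\ast k}=B_{e^\ast k}+\delta$ and $A_{e^\ast,k+1}=B_{e^\ast,k+1}-\delta$ for some $\delta\ge 0$, all other entries being equal; note that $\dfr(A,B)=\delta$ (move $\delta$ units of mass one position) and that $B_{e^\ast,k+1}\ge\delta$ because $A_{e^\ast,k+1}\ge 0$. The first (and conceptually key) step I would record is that in Algorithm~\ref{alg:rand_rounding} the index $I_e$ is a deterministic function of $\alpha_e$ and of row $e$ of the input matrix alone; hence for every $e\neq e^\ast$ the index $I_e$ is the \emph{same} in the two runs producing $\pi^A$ and $\pi^B$, and so is the lexicographic tie-break. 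Thus only $e^\ast$ can be assigned different indices by the two runs.

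Next I would localize the randomness that can change $e^\ast$'s index. Writing $P_j=\sum_{s\le j}A_{e^\ast s}$ and $Q_j=\sum_{s\le j}B_{e^\ast s}$, we have $P_j=Q_j$ for all $j\neq k$ and $P_k=Q_k+\delta$. A short case analysis on the position of $\alpha_{e^\ast}$ relative to the thresholds $\log n\cdot P_{k-1}=\log n\cdot Q_{k-1}$, $\log n\cdot Q_k$, $\log n\cdot P_k$ and $\log n\cdot P_{k+1}=\log n\cdot Q_{k+1}$ shows: outside the event
$$E:=\{\,\log n\cdot Q_k<\alpha_{e^\ast}\le \log n\cdot P_k\,\}$$
the index $I_{e^\ast}$ is identical in the two runs, hence $\pi^A=\pi^B$ and $\dkt(\pi^A,\pi^B)=0$; while on $E$ one gets $I^A_{e^\ast}=k$ and $I^B_{e^\ast}=k+1$. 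Since $\alpha_{e^\ast}$ is uniform on $[0,1]$, $\Pr[E]\le \log n\,(P_k-Q_k)=\log n\cdot\dfr(A,B)$.

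It then remains to bound $\dkt(\pi^A,\pi^B)$ conditioned on $E$. Because every $e\neq e^\ast$ keeps its index and all such elements keep their mutual order, the only pairs that can be inverted between $\pi^A$ and $\pi^B$ are pairs $\{e^\ast,e\}$, and comparing the keys $(k,e^\ast)$ and $(k+1,e^\ast)$ against $(I_e,e)$ shows that $\{e^\ast,e\}$ is inverted only when $I_e\in\{k,k+1\}$. Hence on $E$
$$\dkt(\pi^A,\pi^B)\le \big|\{e\neq e^\ast:I_e=k\}\big|+\big|\{e\neq e^\ast:I_e=k+1\}\big|.$$
For $e\neq e^\ast$ the index $I_e$ depends only on $\alpha_e$, which is independent of $\alpha_{e^\ast}$ and thus of $E$; moreover $\Pr[I_e=j]\le \log n\cdot A_{ej}$ directly from the definition of $I_e$ and uniformity of $\alpha_e$. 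Taking expectations and using the column-sum bound $\sum_{e\in U}A_{ej}\le 2$ gives $\mathbb{E}\big[\,|\{e\neq e^\ast:I_e=j\}|\mid E\,\big]\le \log n\sum_{e\in U}A_{ej}\le 2\log n$ for $j\in\{k,k+1\}$, so $\mathbb{E}[\dkt(\pi^A,\pi^B)\mid E]\le 4\log n$. Finally, since $\dkt(\pi^A,\pi^B)=0$ off $E$,
$$\mathbb{E}[\dkt(\pi^A,\pi^B)]=\Pr[E]\cdot\mathbb{E}[\dkt(\pi^A,\pi^B)\mid E]\le \log n\cdot\dfr(A,B)\cdot 4\log n=4\log^2 n\cdot\dfr(A,B).$$

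\paragraph{Main obstacle.} The delicate point is the case analysis establishing both that $E$ is \emph{exactly} the event in which $e^\ast$'s index changes and that on $E$ it changes by exactly one level ($k\to k+1$), never jumping further. This relies crucially on the neighboring structure, namely $P_{k+1}=Q_{k+1}$ (the $\delta$ removed from column $k+1$ cancels the $\delta$ added to column $k$) together with $B_{e^\ast,k+1}\ge\delta$; without these one could not rule out $I^B_{e^\ast}>k+1$. Everything else — the independence of $\{\alpha_e\}_{e\neq e^\ast}$ from $E$, the estimate $\Pr[I_e=j]\le\log n\cdot A_{ej}$, and the column-sum bound — is routine once this structural picture is fixed. (One may assume $n\ge2$, so that $\log n\ge 1$ and $I_e$ is well defined; the case $n=1$ is trivial.)
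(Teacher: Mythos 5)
Your proposal is correct and follows essentially the same route as the paper's proof: isolate the single row $e^\ast$ where the neighboring matrices differ, note that all other elements keep their index so the permutations coincide unless $\alpha_{e^\ast}$ falls in an interval of length at most $\log n\cdot\dfr(A,B)$, and on that event bound the Kendall--Tau distance by the number of other elements landing at the two affected indices, whose expectation is at most $2\log n$ each by the column-sum bound and independence of the $\alpha_e$'s. Your explicit case analysis of the event $E$ just spells out what the paper states as $\Pr[I_{e^\star}^A=i\wedge I_{e^\star}^B=j]=0$ for $(i,j)\neq(i^\star,i^\star+1)$; there is no substantive difference.
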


\begin{proof}[Proof of Lemma~\ref{l:neigh}]
Since $A,B$ are neighboring there
exists exactly two consecutive entries for which $A,B$ differ, denoted as $(e^\ast,i^\ast)$ and $(e^\ast,i^\ast+1)$. Let $\epsilon := A_{e^\ast i^\ast} - B_{e^\ast i^\ast }$, by the  Definition~\ref{d:distance_lp} of FootRule distance, we get that $\dfr(A,B) = |\epsilon|$. Without loss of generality we consider $\epsilon >0$ (the case $\epsilon <0$ symmetrically follows). We also denote with $O_i$ the set of elements $O_i := \{e \neq e^\star \text{ such that } I_e^A = i\}$ and with $I_e^A,I_e^B$ the indices in Step~$6$ of Algorithm~\ref{alg:rand_rounding}.\\

\noindent Since $A,B$ are neighboring, the $e$-th row of $A$ and the $e$-th row of $B$ are identical for all $e \neq e^\star$. As a result, $I_e^A = I_e^B$ for all $e \neq e^\star$. Furthermore the neighboring property implies that
even for $e^\ast$, $\sum_{s=1}^i A_{e^\star s} = \sum_{s=1}^i B_{e^\star s}$ for all $i \neq i^\star$ and thus $\Pr \left[ I_{e^\star}^A=i \wedge I_{e^\star}^B = j \right] = 0$ for $(i,j) \neq (i^\star,i^\star+1)$. Now notice that
\begin{align*}
\Pr\left[ I_{e^\star}^A=i^\star, I_{e^\star}^B = i^\star + 1 \right] &\leq 
\Pr\left [ \log n \cdot \sum_{s=1}^{i^\star} B_{e^\star s}
\leq \alpha_e \leq \log n \cdot \sum_{s=1}^{i^\star} A_{e^\star s} \right] \\
&\leq 
\log n \cdot \left( A_{e^\star i ^\star} -  B_{e^\star i ^\star}\right)  = \log n \cdot \epsilon
\end{align*}

\noindent Notice also that in case $I_{e^\star}^A  = I_{e^\star}^B$, $\mathrm{d}_{\mathrm{KT}}(\pi_A,\pi_B) = 0$. This is due to the fact that in such a case $I_e^A = I_e^B$ for all $e \in U$ and the fact that ties are broken lexicographically. As a result,
\begin{align*}
 \mathbb{E}\left[\mathrm{d}_{\mathrm{KT}}(\pi_A,\pi_B) \right] &= \Pr\left[ I_{e^\star}^A \neq I_{e^\star}^B\right] \cdot \mathbb{E}\left[\mathrm{d}_{\mathrm{KT}}(\pi_A,\pi_B)|~ I_{e^\star}^A \neq I_{e^\star}^B\right]\\
&=\Pr[ I_{e^\star}^A=i^\star, I_{e^\star}^B = i^\star + 1 ] \cdot \mathbb{E}\left[\mathrm{d}_{\mathrm{KT}}(\pi_A,\pi_B)|~ I_{e^\star}^A=i^\star, I_{e^\star}^B =i^\star + 1\right]\\ 
&\leq \epsilon \log n \cdot \left( \mathbb{E}\left[|O_{i^\star}|\right] + \mathbb{E}\left[|O_{i^\star + 1}|\right] \right)
\end{align*}
where the last inequality follows by the fact that once $I_{e^\ast}^A = i^\ast$ and $I_{e^\ast}^B = i^\ast + 1$, the element $e^\ast$ can move at most by $|O_{i^\ast}| + |O_{i^\ast+1}|$ positions and the fact that $I_{e^\ast}^A,I_{e^\ast}^B$ and $|O_{i^\ast}|,|O_{i^\ast+1}|$ are independent random variables.\\

\noindent We complete the proof we providing a bound on $\mathbb{E}\left[|O_i|\right]$.  Notice that for $e \in U/ \{e^\ast\}$,
$$ \Pr[ e \in O_{i}] \leq \Pr \left [ \log n \sum_{s=1}^{i-1} A_{es} \leq \alpha_e \leq \log n \sum_{s=1}^{i} A_{es}\right] \leq \log n \cdot A_{ei} $$
which implies that $\mathbb{E}\left[|O_i|\right] \leq \log n \sum_{e \neq e^\star} A_{ei} \leq 2 \log n$. Finally we overall get,
$$\mathbb{E}\left[\mathrm{d}_{\mathrm{KT}}(\pi_A,\pi_B) \right] \leq
4\log^2 n \cdot \dfr(A,B)$$
\end{proof}

\noindent The proof of Lemma~\ref{l:rand_moving_cost} easily follows by Claim~\ref{c:neighboring_sequence} and Lemma~\ref{l:neigh}. 

\begin{proof}[Proof of Lemma~\ref{l:rand_moving_cost}] Given the doubly stochastic matrices $A,B$, let the sequence  
$A = A^0,A^1,\ldots,A^T = B$ of neighboring stochastic matrices ensured by  Claim~\ref{c:neighboring_sequence}. Now let $\pi^0,\pi^1,\ldots,\pi^T$ the sequence of permutations that the randomized rounding of Algorithm~\ref{alg:rand_rounding} produces given as input the sequence $A = A^0,A^1,\ldots,A^T = B$. Notice that,
\[\mathbb{E}\left[\mathrm{d}_{\mathrm{KT}}(\pi^A,\pi^B)\right] \leq
\sum_{t=1}^t \mathbb{E}\left[\mathrm{d}_{\mathrm{KT}}(\pi^t,\pi^{t-1})\right]
\leq 4\log^2 n \cdot 
\sum_{t=1}^T\dfr(A^t,A^{t-1})
=  4\log^2 n \cdot \dfr(A,B)\]
where the first inequality follows by the triangle inequality, the second by Lemma~\ref{l:neigh}
and the last equality by Case $4$ of Claim~\ref{c:neighboring_sequence}.
\end{proof}

\noindent We conclude the section with the proof of Theorem~\ref{t:rand}.
\begin{proof}[Proof of Theorem~\ref{t:rand}]
By Lemma~\ref{l:rand_moving_cost} and Lemma~\ref{l:relax},
$$ \sum_{t=1}^T \mathbb{E}\left[\dkt(\pi^t,\pi^{t-1})\right] \leq 
4\log^2 n \cdot \sum_{t=1}^T \dfr(A^t,A^{t-1}) \leq 4\log^2 n \cdot \mathrm{OPT}_{\DSSC}
$$

\noindent Up next we bound the expected covering cost $\sum_{t=1}^T \mathbb{E}\left[\pi^t(R_t)\right]$. Notice that since $\sum_{e \in R_t} A_{e1}^t = 1$, the only elements that can have index $I_e^t = 1$ are the elements $e \in R_t$. As a result, in case there exists some $e$ at round $t$ with $I_e^t = 1$ then $\pi^t(R_t) = 1$.
\begin{eqnarray*}
\mathbb{E}\left[\pi^t(R_t)\right] &\leq& 1 + n \cdot \Pr \left[ I_e^t >1 \text{ for all } e\in R_t \right]\\
&\leq& 1 + n \cdot \Pi_{e \in R_t} \left(1 - \log n \cdot A_{e1}^t\right)\\
&\leq& 1 + n \cdot e^{- \log n \cdot \sum_{e \in R_t}A_{e1}^t}= 2 \cdot  \pi_{\mathrm{Opt}}^t(R_t)
\end{eqnarray*}
where the last inequality follows due to the fact that $\sum_{e \in R_t} A_{e1}^t = 1$ and $\pi_{\mathrm{Opt}}^t(R_t) \geq 1$.
\end{proof}

\section{Proof of Theorem~\ref{t:greedy}}\label{s:greedy}
 \noindent In this section we present the basic steps towards the proof of Theorem~\ref{t:greedy}. We remind that $|R_t| \leq r$ and we denote with $e_t$ the element that Algorithm~\ref{alg:greedy_rounding} moves in the fist position at round $t$. As already mentioned, the proof is structured in two 
 different steps.
 \smallskip
 \begin{enumerate}
     \item We prove the existence of a sequence of doubly stochastic matrices $\hat{A}^0 = \pi^0,\hat{A}^1,\ldots,\hat{A}^T$ such that \textbf{(i)} the entries of each $\hat{A}^t$ are multiples of $1/r$, \textbf{(ii)} each $\hat{A}^t$ admits $1/r$ mass for element $e_t$ in first position $(\hat{A}^t_{e_t 1} \geq 1/r)$ and
     \textbf{(iii)}
     $\sum_{t=1}^T \dfr(\hat{A}^t,\hat{A}^{t-1}) \leq \sum_{t=1}^T \dfr(A^t,A^{t-1})$.\smallskip
     \item We use properties $\textbf{(i)}$ and $\textbf{(ii)}$ to prove that the moving cost of Algorithm~\ref{alg:greedy_rounding} is roughly upper bounded by $\Theta(r^2) \cdot \sum_{t=1}^T \dfr(\hat{A}^t,\hat{A}^{t-1})$.
 \end{enumerate}

\noindent We start with the construction of the sequence $\hat{A}^0=\pi^0,\hat{A}^1,\ldots,\hat{A}^T$.
\begin{definition}\label{def:lp_2}
For the sequence of elements $e_1,\ldots,e_T \in U$ (the elements that Algorithm~\ref{alg:greedy_rounding} moves to the fist position at each round), consider the following linear program,
\begin{equation*}
    \begin{array}{ll@{}ll}
        \text min & \displaystyle \sum_{t=1}^T \dfr(\hat{A}^t,\hat{A}^{t-1})&\\
        \text{ s.t } & \displaystyle \sum_{i=1}^{n} \hat{A}_{ei}^t = 1 ~~~~~~~~~~\text{for all } e \in U \text{ and } t = 1,\ldots, T  &&\\
        & \displaystyle \sum_{e \in U} \hat{A}_{ei}^t = 1
        ~~~~~~~~~~\text{for all } i = 1, \ldots, n \text{ and } t = 1, \ldots, T  &&\\
        & \displaystyle \hat{A}_{e_t1}^t \geq 1/r ~~~~~~~~~~~\text{for all } t = 1, \ldots, T&\\
        &\displaystyle \hat{A}^0 = \pi^0 &\\
        & \displaystyle\hat{A}_{ei}^t\geq 0~~~~~~~~~~~~~~~\text{for all } e \in U, ~i = 1,\ldots, n  \text{ and }t = 1,\ldots, T&\\
    \end{array}
\end{equation*}
\end{definition}
\noindent The sequence $\hat{A^0}=\pi^0,\ldots,\hat{A}^T$ is defined as the optimal solution of the LP in Definition~\ref{def:lp_2} with the entries of each $\hat{A}^t$ being \textbf{multiples of $1/r$}. The existence of such an optimal solution is established in Lemma~\ref{l:upper_bound_r}.   

\begin{lemma}\label{l:upper_bound_r}
There exists an optimal solution $\hat{A} = \pi^0,\hat{A}^1,\ldots,\hat{A}^T$ for the linear program of Definition~\ref{l:upper_bound_r} such that entries of each $\hat{A}^t$ are multiples of $1/r$.
\end{lemma}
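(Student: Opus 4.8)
The plan is to pass to cumulative (prefix-sum) coordinates, recognize the resulting program as a family of coupled fractional paging instances, and solve each of them by the furthest-in-the-future (Belady) eviction rule, which in this setting is automatically $1/r$-granular.

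First I would introduce $F^t_{e\ell}:=\sum_{s=1}^{\ell}\hat A^t_{es}$. The map $\hat A^t\mapsto F^t$ is a bijection between doubly stochastic $n\times n$ matrices and arrays that are non-decreasing in $\ell$ with $F^t_{en}=1$ and $\sum_{e\in U}F^t_{e\ell}=\ell$ for every $\ell$, and it sends the constraint $\hat A^t_{e_t1}\ge 1/r$ to $F^t_{e_t1}\ge 1/r$. Since the optimal transport cost between two distributions on the line $\{1,\dots,n\}$ with ground metric $|i-j|$ equals the $\ell_1$-distance of their cumulative distribution functions, $\dfr(\hat A^t,\hat A^{t-1})=\sum_{e\in U}\sum_{\ell=1}^{n-1}|F^t_{e\ell}-F^{t-1}_{e\ell}|$. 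Hence the program of Definition~\ref{def:lp_2} becomes: minimize $\sum_{t=1}^{T}\sum_{e\in U}\sum_{\ell=1}^{n-1}|F^t_{e\ell}-F^{t-1}_{e\ell}|$ over such arrays, with $F^t_{e_t1}\ge 1/r$ and $F^0$ the prefix sums of $\pi^0$. The objective splits over the columns $\ell$, and monotonicity in $\ell$ forces $F^t_{e_t\ell}\ge F^t_{e_t1}\ge 1/r$ for every $\ell$; therefore restricting a feasible array to a fixed column $\ell$ yields a feasible solution of the \emph{cache-$\ell$ problem}: maintain $x^t\in[0,1]^{U}$ with $\sum_{e}x^t_e=\ell$, starting from the indicator vector of the first $\ell$ elements of $\pi^0$, subject to $x^t_{e_t}\ge 1/r$ for all $t$, minimizing $\sum_{t}\sum_{e}|x^t_e-x^{t-1}_e|$. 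Consequently the optimum of the program of Definition~\ref{def:lp_2} is at least $\sum_{\ell=1}^{n-1}V_\ell$, where $V_\ell$ denotes the optimum of the cache-$\ell$ problem.

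Next I would solve the cache-$\ell$ problem by the lazy furthest-in-the-future policy: at step $t$ do nothing if $x^{t-1}_{e_t}\ge 1/r$, and otherwise bring in $1/r$ units of $e_t$ and remove $1/r$ units from the element $p\ne e_t$ with $x^{t-1}_p>0$ whose next request after time $t$ lies furthest in the future, with ties broken by a fixed rule (say, element index). I would argue that (a) this policy is optimal for the cache-$\ell$ problem even among non-granular solutions, by the classical exchange argument for Belady's rule together with the observation that any solution may first be made lazy (postponing every mass movement not forced by the current request never increases the cost); (b) starting from a $0$--$1$ state and moving mass only in chunks of size $1/r$, the policy's state is a multiple of $1/r$ at all times; and (c) the \emph{stack property}: with the fixed tie-breaking rule, the state of the cache-$\ell$ policy is entrywise at most the state of the cache-$(\ell+1)$ policy at every time step. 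The cleanest way to obtain (a) and (c) is to split each element into $r$ unit-sized copies, turning the cache-$\ell$ problem into an ordinary paging instance with cache size $r\ell$ on $rn$ pages (a request for $e_t$ becomes a request for a designated copy of $e_t$, which is without loss of generality because the copies are interchangeable); then optimality of Belady's rule and the inclusion of a smaller Belady cache in a larger one are textbook facts, and the initial states are nested because prefixes of $\pi^0$ are nested.

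Finally I would glue the solutions: set $F^t_{e\ell}$ equal to the cache-$\ell$ policy's state for $\ell\le n-1$ and $F^t_{en}:=1$. By (c) this array is non-decreasing in $\ell$; the column sums, the initial condition, and $F^t_{e_t1}\ge 1/r$ hold by construction; so it is feasible for the reformulated program, with value $\sum_{\ell=1}^{n-1}V_\ell$, which is at most the program's optimum and hence equals it. By (b) every entry is a multiple of $1/r$, and translating back via $\hat A^t_{e\ell}=F^t_{e\ell}-F^t_{e,\ell-1}$ produces an optimal solution of the program of Definition~\ref{def:lp_2} all of whose entries are multiples of $1/r$, which is the claim. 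I expect the main obstacle to lie in this circle of paging facts: making the reduction to ordinary paging faithful — that restricting each element's requests to a single designated copy does not raise the optimum — and carrying the stack property through the reduction; both are true and standard in spirit, but need to be spelled out with some care.
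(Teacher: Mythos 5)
Your proposal is correct in outline and reaches the lemma by a genuinely different route than the paper. The paper constructs the $1/r$-granular sequence directly on the matrices (Algorithm~\ref{alg:greedy_moving}): it moves $1/r$ mass of $e_t$ to the first column, then restores column-stochasticity from left to right, pushing rightwards either redundant mass (an element with cumulative mass at least $2/r$) or, failing that, the mass of the element whose next occurrence in $e_{t+1},\ldots,e_T$ is furthest; the optimality of this bespoke algorithm (Lemma~\ref{l:optimality_Greedy_Moving}) is then proved by a direct and, as the paper itself notes, technically involved argument deferred to the full version. You instead pass to prefix sums, use the one-dimensional transport identity $\dfr(A,B)=\sum_{e}\sum_{\ell=1}^{n-1}|F^A_{e\ell}-F^B_{e\ell}|$ to split the objective across position thresholds $\ell$, observe that each column is a fractional paging instance with cache $\ell$ (equivalently, cache $r\ell$ on $rn$ copies of size $1/r$), solve each by Belady, and glue via the inclusion (stack) property of OPT; optimality follows by comparing the glued value $\sum_{\ell}V_\ell$ (with $V_\ell$ the cache-$\ell$ optimum) against the lower bound obtained by restricting any feasible solution to its columns, which is legitimate since monotonicity in $\ell$ propagates the constraint $F^t_{e_t1}\ge 1/r$ to every column. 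This buys modularity and an explanation, rather than a verification, of the furthest-in-the-future rule (the paper's ``redundant mass first'' preference is exactly eviction of never-requested copies, whose next request is at infinity). What it costs is the three obligations you rightly flag, none of which is literally a textbook citation in the form you need: (i) that the fractional cache-$\ell$ optimum is attained at a $1/r$-granular solution (provable, e.g., by writing the column problem as a min-cost flow on a time-expanded network with integral data after scaling by $r$, or by an exchange argument); (ii) that requiring a designated copy of $e_t$ rather than ``any copy'' does not raise the integral optimum (canonicalize every state so that each element keeps its lowest-indexed copies); and (iii) that Belady with a tie-breaking rule common to all cache sizes satisfies inclusion also for full, nested warm starts (a short induction over time, in the spirit of the classical stack-algorithm property of MIN). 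With these spelled out, your argument is a complete and arguably cleaner proof of Lemma~\ref{l:upper_bound_r}.
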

The proof of Lemma~\ref{l:upper_bound_r} is one of the main technical contributions of this work. Due to lack of space its proof is deferred to  the full version of the paper. We remark that the \textit{semi-integrality property}, that Lemma~\ref{l:upper_bound_r} states, is not due to the properties of the LP's polytope and in fact there are simple instances in which the optimal extreme points do not satisfy it. We establish Lemma~\ref{l:upper_bound_r} via the design of an optimal algorithm for the LP of Definition~\ref{def:lp_2} (Algorithm~\ref{alg:greedy_moving}) that always produces solutions with entries being multiples of $1/r$. Up next we describe in brief the idea behind Algorithm~\ref{alg:greedy_moving}.

Given the matrix $\hat{A}^{t-1}$, Algorithm~\ref{alg:greedy_moving} construct $\hat{A}^t$ as follows. 
At first it moves $1/r$ mass from the left-most entry $(e_t,j)$ with
$\hat{A}^{t-1}_{e_t j} \geq 1/r$ to the entry $(e_t,1)$. At this point the third constraint of the LP in Definition~\ref{def:lp_2} is satisfied but the column-stochasticity constraints are violated (the first column admits mass $1+1/r$ and the $j$-th column admits mass $1-1/r$). Algorithm~\ref{alg:greedy_moving} inductively restores column-stochasticity from left to right. At step $i$, all the columns on the left of $i$ are restored and the violations concern the column $i$ and $j$ ($i$'s mass is $1+1/r$ and $j$'s mass is $1-1/r$). Now Algorithm~\ref{alg:greedy_moving} must move a total of $1/r$ mass from column $i$ to column $i+1$. In case there exists an element $e$ with total amount of mass greater than $2/r$, Algorithm~\ref{alg:greedy_rounding} moves the $1/r$ mass from the entry $(e,i)$ to the entry $(e,i+1)$. The reason is that even if $e = e_{t'}$
at some future round $t'$, the third constraint only requires $1/r$ mass. In case there is no such element, Algorithm~\ref{alg:greedy_moving} moves the $1/r$ mass from the element appearing the furthest in the sequence $\{e_t,\ldots,e_T\}$. The latter is in accordance with the optimal eviction policy for $\mathrm{k}-\mathrm{Paging}$ which at each round evicts the page appearing furthest in the future \cite{BBN12}. The optimality of Algorithm~\ref{alg:greedy_moving} is established in Lemma~\ref{l:optimality_Greedy_Moving} and the fact that produced solution admits values being $1/r$ is inductively established.

To this end, we can show that all of the desired properties of the sequence $\hat{A} = \pi^0,\hat{A}^1,\ldots,\hat{A}^T$
are satisfied. Property~$\textbf{(i)}$ is established by Lemma~\ref{l:upper_bound_r}. Property~$\textbf{(ii)}$ is enforced by the constraint $\hat{A}_{e_t 1}^t \geq 1/r$. Now for Property~$\textbf{(iii)}$, notice that by the definition of Algorithm~\ref{alg:greedy_rounding}, $A_{e_t 1}^t \geq 1/r$. As a result, the sequence $A^0=\pi^0,A^1,\ldots,A^T$ is feasible for the linear program of Definition~\ref{def:lp_2} and thus $\sum_{t=1}^T \dfr(\hat{A}^t,\hat{A}^{t-1}) \leq \sum_{t=1}^T \dfr(A^t,A^{t-1})$.

\begin{lemma}\label{l:r_integral}
Let $\pi^0,\pi^1,\ldots,\pi^T$ the permutations produced by Algorithm~\ref{alg:greedy_rounding} and $e_1,\ldots,e_T$ the elements that Algorithm~\ref{alg:greedy_rounding} moves to the first position at each round $t$. For any sequence of doubly stochastic matrices $\hat{A}^0 = \pi^0,\hat{A^1},\ldots,\hat{A^T}$ for which Property~\textbf{(i)} and~Property~\textbf{(ii)} are satisfied, $\sum_{t=1}^T \dkt(\pi^t,\pi^{t-1}) \leq 2r^2 \cdot  \sum_{t=1}^T \dfr(\hat{A}^t,\hat{A}^{t-1}) + r \cdot T$.
\end{lemma}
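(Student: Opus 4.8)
The goal is to bound the Kendall–Tau moving cost of Algorithm~\ref{alg:greedy_rounding} by the FootRule cost of the sequence $\hat A^0,\dots,\hat A^T$, up to a factor $O(r^2)$ plus an additive $rT$ term. The natural strategy is a potential-function argument. I would first introduce a generalization of the Kendall–Tau distance to doubly stochastic matrices whose entries are multiples of $1/r$ (this is precisely Definition~\ref{d:frac_KT} that the excerpt promises). Think of such a matrix as $r$ ``layers'': each column $i$ carries total mass $1$, i.e.\ $r$ units of mass of size $1/r$, and the generalized $\dkt$ of two such matrices counts, over all pairs of elements $(e,e')$ and all pairs of mass-units, how many are ``inverted'' (one unit of $e$ is left of one unit of $e'$ in the first matrix but right of it in the second), suitably normalized by $1/r^2$. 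For a genuine permutation matrix this reduces to the ordinary $\dkt$. The key soft facts I need are: (a) this generalized distance is comparable to the FootRule distance of Definition~\ref{d:distance_lp} on matrices with $1/r$-granular entries, up to a factor $O(r^2)$; and (b) it satisfies the triangle inequality, so telescoping is legitimate.

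Define the potential $\Phi^t := \widetilde{\dkt}(\pi^t, \hat A^t)$, the generalized Kendall–Tau distance between the (permutation corresponding to the) algorithm's state $\pi^t$ and the fractional matrix $\hat A^t$. I would bound the moving cost in one step as
\[
\dkt(\pi^t,\pi^{t-1}) \;\le\; \widetilde{\dkt}(\pi^{t-1},\hat A^{t-1}) - \widetilde{\dkt}(\pi^t,\hat A^t) + O(r^2)\cdot\dfr(\hat A^t,\hat A^{t-1}) + r,
\]
and then sum over $t$; since $\Phi^0 = \widetilde{\dkt}(\pi^0,\hat A^0) = 0$ (because $\hat A^0 = \pi^0$) and $\Phi^T \ge 0$, the telescoping sum of the potential differences is at most $0$, which yields exactly the claimed bound $\sum_t \dkt(\pi^t,\pi^{t-1}) \le 2r^2\sum_t \dfr(\hat A^t,\hat A^{t-1}) + rT$. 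To establish the per-step inequality I would split the change from step $t-1$ to $t$ into two conceptual moves: first, $\hat A^{t-1}$ changes to $\hat A^t$ while $\pi^{t-1}$ is held fixed — by the triangle inequality for $\widetilde{\dkt}$ and fact (a), this changes the potential by at most $O(r^2)\cdot\dfr(\hat A^t,\hat A^{t-1})$; second, $\pi^{t-1}$ changes to $\pi^t$ (moving $e_t$ to the front) while $\hat A^t$ is held fixed. The crucial point for the second move is Property~\textbf{(ii)}: since $\hat A^t_{e_t 1} \ge 1/r$, the element $e_t$ already has a $1/r$-mass unit sitting in the first column of $\hat A^t$, so in the generalized-$\dkt$ accounting, pulling $e_t$ to the front of $\pi^t$ can only \emph{decrease} the number of inverted mass-pairs involving that unit, while the pairs involving the remaining $\le (r-1)/r$ mass of $e_t$ (and the $1/r$ granularity of every other element) can increase the count by at most $r$ per passed element — but more carefully, since every element it passes contributes at most $1$ to the true $\dkt$ and at most a controlled amount to $\widetilde{\dkt}$, the net effect is that $\dkt(\pi^t,\pi^{t-1}) \le \widetilde{\dkt}(\pi^{t-1},\hat A^t) - \widetilde{\dkt}(\pi^t,\hat A^t) + r$. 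Combining the two moves and relabeling $\widetilde{\dkt}(\pi^{t-1},\hat A^t)$ back to $\widetilde{\dkt}(\pi^{t-1},\hat A^{t-1})$ via the first move gives the per-step bound.

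The main obstacle I anticipate is making the second move rigorous: precisely quantifying how $\widetilde{\dkt}(\cdot,\hat A^t)$ changes when $e_t$ is moved to the front of the permutation, and extracting the additive $+r$ (rather than $+r^2$, say) from the fact that only $e_t$ moves and it has guaranteed $1/r$ fractional mass in the first column of the target. One has to set up the mass-unit pairing carefully so that the ``good'' unit of $e_t$ in column $1$ of $\hat A^t$ cancels against the newly created ``good'' position of $e_t$ at the front of $\pi^t$, and show the leftover bookkeeping is $O(r)$ rather than $O(r^2)$ — the $O(r^2)$ factor should appear only through fact (a) in the first move. The secondary obstacle is establishing fact (a), the comparison between the generalized $\dkt$ and $\dfr$ for $1/r$-granular matrices; but this should follow the same line as the classical bound $\dkt \le \dfr \le 2\dkt$ for permutations, applied layer-by-layer to the $r$ unit-layers, with the quadratic blowup coming from counting pairs of layers.
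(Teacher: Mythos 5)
Your overall architecture is the same as the paper's: a potential $\Phi^t$ equal to a generalized Kendall--Tau distance between $\pi^t$ and $\hat A^t$, a per-step split into ``update the matrix'' (handled by the triangle inequality plus a comparison lemma $\widetilde{\dkt}\leq O(r^2)\,\dfr$ on $1/r$-granular matrices) and ``move $e_t$ to the front'' (handled by Property~\textbf{(ii)} with additive slack $r$), and telescoping from $\Phi^0=0$. However, the concrete distance you propose -- counting inverted pairs of $1/r$ mass-units over all pairs of elements and normalizing by $1/r^2$ -- does not satisfy the crucial second-move inequality $\dkt(\pi^t,\pi^{t-1}) \leq \widetilde{\dkt}(\pi^{t-1},\hat A^t)-\widetilde{\dkt}(\pi^t,\hat A^t)+r$, and this is not just bookkeeping. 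Take $\hat A^t$ with $\hat A^t_{e_t 1}=1/r$ and $\hat A^t_{e_t n}=(r-1)/r$, while every element $\alpha$ preceding $e_t$ in $\pi^{t-1}$ has all of its mass integrally at some middle column. For each such $\alpha$, before the move only the $r$ unit-pairs involving $e_t$'s front unit are inverted (contribution $1/r$), whereas after the move the $(r-1)r$ unit-pairs involving $e_t$'s mass at column $n$ become inverted (contribution $(r-1)/r$); so the potential \emph{increases} by $(r-2)/r$ per passed element, while the true cost $\dkt(\pi^t,\pi^{t-1})=k_t-1$ is incurred. The averaged, layer-by-layer accounting dilutes the credit of the guaranteed $1/r$ mass in column $1$ to a $1/r$ fraction, which is exactly the failure you flagged as your ``main obstacle'' -- but it cannot be repaired within your definition by rescaling, since any uniform rescaling breaks either the reduction to $\dkt$ on permutations or the comparison with $\dfr$.

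The paper's fix is a different definition: the $r$-index $I_e^A=\mathrm{argmin}\{i:\ \sum_{s\leq i}A_{es}\geq 1/r\}$, i.e.\ each element is credited only for its \emph{leftmost} $1/r$ of mass, and the fractional Kendall--Tau distance counts element pairs (not mass-unit pairs) whose $r$-indices are ordered inconsistently, where ``tied in one matrix but not in the other'' also counts as inverted (this is needed both for the triangle inequality and for the per-step accounting). With this definition Property~\textbf{(ii)} says $I_{e_t}^{\hat A^t}=1$; since at most $r$ elements can have $r$-index $1$, moving $e_t$ from position $k_t$ to the front destroys at least $k_t-r$ inversions and creates at most $r-1$, giving exactly the $+r$ slack, and the factor $2r^2$ enters only through the separate lemma $\dkt(A,B)\leq 2r^2\,\dfr(A,B)$ for $1/r$-granular matrices (proved by decomposing the optimal transportation plan into $1/r$-mass moves, each of footrule cost $|i-j|/r$ and creating at most $O(r|i-j|)$ inversions). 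So your plan is structurally right but rests on a distance that provably fails at the decisive step; substituting the $r$-index-based distance is the missing idea.
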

\noindent The proof of Theorem~\ref{t:greedy} directly follows by Lemma~\ref{l:upper_bound_r} and~\ref{l:r_integral}. In Section~\ref{sub:greedy_1} we present the basic steps for of Lemma~\ref{l:upper_bound_r}.
\subsection{Proof of Lemma~\ref{l:upper_bound_r}}\label{sub:greedy_2}
We prove the existence of an optimal solution $\hat{A^0} = \pi^0,\hat{A^1},\ldots,\hat{A}^T$ for the linear program of Definition~\ref{def:lp_2} for which the entries of each matrix $\hat{A}^t$ are multiples of $1/r$ though the design of an optimal greedy algorithm illustrated in Algorithm~\ref{alg:greedy_moving}.  

The fact that Algorithm~\ref{alg:greedy_moving} produces a solution with entries that multiples of $1/r$ easily follows. Algorithm~\ref{alg:greedy_moving} starts with an integral doubly stochastic matrices ($\hat{A}^0 = \pi^0$) and always moves $1/r$ mass from entry to entry. The optimality of Algorithm~\ref{alg:greedy_moving} is established in Lemma~\ref{l:optimality_Greedy_Moving} the proof of which is presented in the next section since it is quite technically complicated. However the basic idea of the algorithms is very intuitive, once $\hat{A}^{t-1}_{e_t} = 0$ Algorithm~\ref{alg:greedy_moving} moves $1/r$ mass of $e_t$ from its leftmost position (with mass greaer than $1/r$), denoted as $\mathrm{Pos}$ of Step~$5$. At this point of time, Algorithm~\ref{alg:greedy_moving} has violated the column-stochasticity constraints, $1+1/r$ for the first column and $1-1/r$ for the $\mathrm{Pos}$-th column and Algorithm~\ref{alg:greedy_moving} must move at total of $1/r$ mass from the first position to next positions until $1/r$ mass reaches the $\mathrm{Pos}$ position and column-stochasticity is restored (Step~$8$). Once Algorithm~\ref{alg:greedy_moving} detects an element with aggregated mass (until position $j$) $\geq 2/r$, it can safely move $1/r$ of each mass to position $j+1$ since even if this element appears at some point in the future only $1/r$ is necessary to satisfy the constraint $A_{e_t 1}^t \geq 1/r$ and thus the rest is redundant (Step~$11$). In case such an element does not exist, Algorithm~\ref{alg:greedy_moving} moves the (useful) $1/r$ mass of the element appearing the furthest in the remaining sequence $\{e_t,\ldots,e_{T}\}$, which is exactly the same optimal \textit{eviction policy} that the well-studied $k-\mathrm{Paging}$ suggests. 

\begin{lemma}\label{l:optimality_Greedy_Moving} Algorithm~\ref{alg:greedy_moving} produces an optimal solution $\hat{A}^0=\pi^0,\hat{A}^1,\ldots,\hat{A}^T$ for the linear program of Definition~\ref{def:lp_2} while the entries of each $\hat{A}^t$ are multiples of $1/r$.
\end{lemma}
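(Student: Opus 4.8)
The plan is to prove Lemma~\ref{l:optimality_Greedy_Moving} in two parts: that the sequence $\hat A^0=\pi^0,\hat A^1,\ldots,\hat A^T$ returned by Algorithm~\ref{alg:greedy_moving} is feasible for the linear program of Definition~\ref{def:lp_2} with every entry a multiple of $1/r$, and that it attains the minimum objective value. The first part is routine and would be handled by a single induction over the elementary operations of the algorithm. Since $\hat A^0=\pi^0$ is a $0$-$1$ matrix and every operation transports exactly $1/r$ units of mass between two horizontally adjacent entries of one row, every intermediate matrix, hence every $\hat A^t$, has entries in $\{0,1/r,\ldots,1\}$; the same induction shows each operation is well defined (the leftmost entry of $e_t$ carrying at least $1/r$ exists because its row sums to $1$, and when the restoration loop must push $1/r$ out of an over-full column some element holds at least $1/r$ there). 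Row-stochasticity is preserved because mass never leaves a row, column-stochasticity is exactly what the left-to-right restoration loop re-establishes, and $\hat A^t_{e_t1}\ge 1/r$ holds after the first move of round $t$ and is never destroyed, because the loop removes mass from column $1$ only while that column carries more than $1$ and, by the eviction rule, never from $e_t$ unless $e_t$ holds at least $2/r$ there, in which case at least $1/r$ remains.

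For optimality I would pass to the cumulative coordinates $y^t_{e,i}:=\sum_{s=1}^{i}\hat A^t_{es}$. The transportation LP of Definition~\ref{d:distance_lp} gives $\dfr(\hat A^t,\hat A^{t-1})=\sum_{e\in U}\sum_{i=1}^{n-1}\bigl|y^t_{e,i}-y^{t-1}_{e,i}\bigr|$, and the constraints of Definition~\ref{def:lp_2} become $0\le y^t_{e,i}\le 1$, the monotonicity $y^t_{e,i}\le y^t_{e,i+1}$, the per-prefix budget $\sum_{e}y^t_{e,i}=i$, and $y^t_{e_t,1}\ge 1/r$. In these coordinates the problem splits, level by level, into $n-1$ fractional caching (fractional $k$-paging) instances sharing the common request sequence $e_1,\ldots,e_T$: at level $i$ the vector $(r\,y^t_{e,i})_e$ distributes $ri$ cache units among the elements with per-element capacity $r$, a request forces at least one unit for $e_t$ at level $1$ and hence, by monotonicity, at every level, and the level-$i$ cost counts units moved. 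The decisive observation is that Algorithm~\ref{alg:greedy_moving} is simultaneously Belady-optimal at every level: in round $t$ it alters $y_{\cdot,i}$ only for $i<\mathrm{Pos}$, where it brings one unit of $e_t$ in and evicts one unit of an element that either holds at least $2/r$ (so its eviction costs nothing in the future, since only $1/r$ is ever demanded) or is the element of $\{e_t,\ldots,e_T\}$ requested furthest in the future --- exactly the optimal furthest-in-the-future eviction policy for $k$-paging~\cite{BBN12}. I would then lower-bound the cost of an arbitrary feasible $B^0=\pi^0,B^1,\ldots,B^T$ by summing over levels: each level-$i$ projection of $B$ is feasible for the level-$i$ caching instance, so its level-$i$ cost is at least the optimal caching value, which Algorithm~\ref{alg:greedy_moving} achieves; summing over $i$ yields $\sum_t\dfr(\hat A^t,\hat A^{t-1})\le\sum_t\dfr(B^t,B^{t-1})$. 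Formally, the per-level optimality would be obtained by running the classical furthest-in-the-future exchange argument in this fractional setting: any optimal level-$i$ solution is modified, one eviction decision at a time, to match the algorithm's choice without increasing the number of moved units.

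The step I expect to be the main obstacle is making this per-level reduction quantitatively exact. Two points need care. First, a single forced move of Algorithm~\ref{alg:greedy_moving} triggers a left-to-right cascade of $1/r$-evictions across the columns $1,\ldots,\mathrm{Pos}-1$, so $\dfr(\hat A^t,\hat A^{t-1})$ is the total length of the cascade, and one must check that this equals the sum over levels of the units evicted by the level-$i$ caching instances in round $t$ --- a bookkeeping that works out cleanly precisely because the entries are multiples of $1/r$, so each column of the cascade corresponds to exactly one level at which $e_t$ newly enters and one element leaves. Second, the per-level caching instances are coupled through the monotonicity constraints $y^t_{e,i}\le y^t_{e,i+1}$, so one has to argue that the greedy solution, and the optimal solution produced by the exchange argument, can be chosen to respect monotonicity at no extra cost; this is exactly why the restoration must be carried out as one left-to-right cascade rather than independently at each level. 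Once these points are settled the lemma follows, and because Algorithm~\ref{alg:greedy_moving} starts from $\pi^0$ and only ever moves $1/r$ of mass, the optimal solution it produces has all entries multiples of $1/r$.
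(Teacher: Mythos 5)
Your plan follows the same route the paper itself sketches for this lemma (the paper defers the actual proof to its full version, describing only the idea): pass to the prefix sums $y^t_{e,i}=\sum_{s\le i}\hat A^t_{es}$, note that $\dfr(\hat A^t,\hat A^{t-1})=\sum_e\sum_{i<n}|y^t_{e,i}-y^{t-1}_{e,i}|$, view each level $i$ as a fractional paging instance, and invoke the furthest-in-the-future eviction principle. The easy half of your argument is sound: feasibility, the fact that all entries stay multiples of $1/r$, and the lower bound $\sum_t\dfr(B^t,B^{t-1})\ge\sum_i(\text{level-}i\text{ optimum})$ for any feasible $B$, since the level-$i$ projection of a feasible sequence is feasible for the level-$i$ instance (using $y^t_{e_t,i}\ge y^t_{e_t,1}\ge 1/r$).

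The genuine gap is the matching upper bound: the claim that Algorithm~\ref{alg:greedy_moving} is \emph{simultaneously} optimal at every level is exactly the ``technically complicated'' content of the lemma, and your proposal asserts it rather than proves it. The per-level reduction is not clean on the algorithm's side: when the cascade restores column $j$, the algorithm can decrease $y_{e,j}$ only for an element holding at least $1/r$ of mass \emph{in column $j$ itself}, whereas the level-$j$ Belady rule may call for evicting an element whose entire level-$j$ mass sits in columns strictly left of $j$ and whose next request is further in the future than that of any column-$j$ candidate. The classical exchange argument shows that evicting the furthest-in-future element is optimal; it does not show that evicting a \emph{non}-furthest element (which the algorithm may be forced to do under this column-availability restriction) costs nothing, and in general it can. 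Closing this requires either an invariant on how the cascade distributes each element's mass across columns (so that the forced victim is in fact never strictly worse), or abandoning strict per-level optimality in favor of a global exchange/potential argument over whole matrix sequences — your remark that the monotonicity coupling ``works out cleanly'' because the restoration is one left-to-right cascade does not supply this. Until that step is carried out, the central inequality $\sum_t\dfr(\hat A^t,\hat A^{t-1})\le\sum_t\dfr(B^t,B^{t-1})$ is unproven, so the proposal is an outline of the right strategy rather than a proof.
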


\begin{algorithm}[t]
  \caption{An Optimal Greedy Algorithm for the LP of Definition~\ref{def:lp_2} 
  }\label{alg:greedy_moving}
  \textbf{Input:} The initial permutation $\pi^0$ and the sequence of elements $e_1,\ldots,e_T \in U$\\
  \textbf{Output:} An optimal solution of a linear program of Definition~\ref{def:lp_2} where the  entries of $\hat{A}^t$ are multiples of $1/r$.

 \begin{algorithmic}[1]
        \STATE Initially $\hat{A}^0 \leftarrow \pi_0$
    
        \FOR{all rounds $t = 1$ \text{ to } $T$}
        
        \STATE $ \hat{A}^t \leftarrow \hat{A}^{t-1}$
  
        \IF {$\hat{A}^t_{e_t 1 } <  1/r $} 
            \STATE \emph{//Move $1/r$ mass of $e_t$ to the first position}
            
            \STATE $~~~~~~~~\mathrm{Pos} \leftarrow \text{argmin}_{1 \leq i \leq n} \{A^t_{ei}\geq 1/r\}$
        
            \STATE $~~~~~~~~\hat{A}^t_{e1} \leftarrow \hat{A}^t_{e1} + 1/r, \hat{A}^t_{e\mathrm{Pos}} \leftarrow \hat{A}^t_{e\mathrm{Pos}} - 1/r$
            
            \STATE \emph{//Restore the column-stochasticity constraints from left to right}
            \FOR{$j = 1$ \text{ to } $\mathrm{Pos} - 1$}

            \IF{ there exists $e \in U$ with $\sum_{s=1}^j \hat{A}_{es}^t \geq 2/r$ and $\hat{A}_{es}^t \geq 1/r$}
    
                \STATE \emph{//Move $1/r$ of its (redundant) mass to the next position}
                \STATE $~~~~\hat{A}_{ej}^t \leftarrow \hat{A}_{ej}^t - 1/r$, $\hat{A}_{ej}^t \leftarrow \hat{A}_{ej}^t + 1/r$   
    
            \ELSE
            
            \STATE \emph{//Move the $1/r$ mass, of the element appearing furthest in the future, to the next position}
            \STATE $~~~~~~~e^\star \in U \leftarrow$  the element with $\hat{A}_{e^\star j}^t = 1/r$ furthest in $\{e_{t+1},\ldots,e_T\}$
            
            \STATE $~~~~~~~\hat{A}_{e^\star j}^t \leftarrow \hat{A}_{e^\star j}^t - 1/r$, $\hat{A}_{e^\star j}^t \leftarrow \hat{A}_{e^\star j}^t + 1/r$
            \ENDIF    
            \ENDFOR            
            
        \ENDIF 
    
        \ENDFOR
        
        \RETURN $\hat{A}_1,\ldots,\hat{A}_T$
  \end{algorithmic}
\end{algorithm}
\subsection{Proof of Lemma~\ref{l:r_integral}}\label{sub:greedy_1}
In order to prove Lemma~\ref{l:r_integral}, we make use of an appropriate potential function that can be viewed as an extension of the Kendall-Tau distance (see Definition~\ref{d:KT}) to doubly stochastic matrices with entries being multiples of $1/r$.
\begin{definition}[\textbf{$r$-Index}]
The $r$-index of an element $e \in U$ in the doubly stochastic matrix $A$, $I_e^A:=\mathrm{argmin}\{1\leq i \leq n:~ \sum_{s=1}^i A_{es} \geq 1/r\}$
\end{definition}

\begin{definition}[\textbf{Fractional Kendall-Tau Distance}]\label{d:frac_KT} Given the doubly stochastic matrices $A,B$, a pair of elements $(e, e') \in U \times U$ is inverted if and only if one of the following condition holds,
\begin{enumerate}
    \item $I_e^A > I_{e'}^A$ and $I_e^B < I_{e'}^B$.
    \item $I_e^A < I_{e'}^A$ and $I_e^B > I_{e'}^B$.
    \item $I_e^A = I_{e'}^A$ and $I_e^B \neq I_{e'}^B$.
    \item $I_e^A \neq I_{e'}^A$ and $I_e^B = I_{e'}^B$.
\end{enumerate}
The fractional Kendall-Tau distance between two doubly stochastic matrices $A,B$, denoted as $\dkt(A,B)$, is the number of inverted pairs of elements. 
\end{definition}
\noindent Notice that in case of $0-1$ doubly stochastic matrices the Fractional Kendall-Tau distance of Definition~\ref{d:frac_KT} coincides with the Kendall-Tau distance of Definition~\ref{d:KT}. 

\begin{claim}\label{c:metric}
Fractional Kendall-Tau Distance satisfies the triangle inequality, $\dkt(A,B) \leq \dkt(A,C) + \dkt(C,B)$.
\end{claim}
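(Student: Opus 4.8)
The plan is to reduce the claim to the elementary fact that the discrete metric on a three-element set satisfies the triangle inequality, applied one pair of elements at a time.

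First I would observe that every doubly stochastic matrix $A$ assigns to each element $e \in U$ a well-defined $r$-index $I_e^A \in \{1,\dots,n\}$ (well-defined because the rows of $A$ sum to $1 \geq 1/r$). For an ordered pair $(e,e')$, define the sign of their $r$-order under $A$ as $\sigma_{ee'}(A) := \mathrm{sign}(I_e^A - I_{e'}^A) \in \{-1,0,+1\}$. The key observation is that $(e,e')$ is inverted between $A$ and $B$ in the sense of Definition~\ref{d:frac_KT} if and only if $\sigma_{ee'}(A) \neq \sigma_{ee'}(B)$: the four cases listed in the definition are exactly the four ways two distinct values of $\{-1,0,+1\}$ can occur (Case~1 is $+1$ vs.\ $-1$, Case~2 is $-1$ vs.\ $+1$, Case~3 is $0$ vs.\ nonzero, Case~4 is nonzero vs.\ $0$), while $\sigma_{ee'}(A) = \sigma_{ee'}(B)$ means the two indices are either equal on both sides or strictly ordered the same way on both sides, i.e.\ the pair is not inverted. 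Hence $\dkt(A,B) = \sum_{(e,e') \in U \times U} \mathbf{1}[\sigma_{ee'}(A) \neq \sigma_{ee'}(B)]$, where the diagonal terms $e = e'$ contribute $0$ and are harmless.

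Next I would fix a pair $(e,e')$ and set $x = \sigma_{ee'}(A)$, $y = \sigma_{ee'}(B)$, $z = \sigma_{ee'}(C)$. The discrete metric obeys $\mathbf{1}[x \neq y] \leq \mathbf{1}[x \neq z] + \mathbf{1}[z \neq y]$: if $x = y$ the left-hand side is $0$; if $x \neq y$ then $x = z$ and $z = y$ cannot both hold, so the right-hand side is at least $1$. Summing this inequality over all ordered pairs $(e,e') \in U \times U$ and substituting the identity from the previous paragraph for each of the three fractional Kendall-Tau distances yields $\dkt(A,B) \leq \dkt(A,C) + \dkt(C,B)$.

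I do not expect a real obstacle here; the only step requiring genuine care is checking that the somewhat intricate four-case definition of an inverted pair coincides exactly with the event ``$\sigma_{ee'}$ takes different values'', after which the triangle inequality follows immediately from the discrete-metric argument. One should additionally note that $\sigma_{ee'}$ is always well-defined, since the $r$-index of every element exists for any (doubly) stochastic matrix.
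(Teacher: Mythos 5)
Your proof is correct and follows essentially the same route as the paper's: an indicator per pair $(e,e')$ together with the pointwise inequality $\mathbf{1}[\text{inverted for }A,B] \leq \mathbf{1}[\text{inverted for }A,C] + \mathbf{1}[\text{inverted for }C,B]$, summed over pairs. Your reformulation via $\sigma_{ee'} = \mathrm{sign}(I_e - I_{e'})$ simply makes explicit the ``short case study'' the paper leaves to the reader.
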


\begin{proof}[Proof of Claim~\ref{c:metric}]
Let $X_{ee'}^{AB} = 1$ if $(e,e')$ is inverted pair for the matrices $A,B$ and $0$ otherwise (respectively for $X_{ee'}^{AC},X_{ee'}^{BC}$). By a short case study one can show that once $X_{ee'}^{AB} = 1$ then
$X_{ee'}^{AC} + X_{ee'}^{BC} \geq 1$ which directly implies Claim~\ref{c:metric}.
\end{proof}

\noindent In the case of doubly stochastic matrices with their entries being multiples of $1/r$, Fractional Kendall-Tau distance relates to FootRule distance of Definition~\ref{d:distance_lp}.
\begin{lemma}\label{l:equivalence}
    Let the doubly stochastic matrices $A,B$ with entries that are multiples of $1/r$. Then $\dkt ( A, B ) \leq 2r^2 \cdot  \dfr( A, B )$.
\end{lemma}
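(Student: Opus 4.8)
The plan is to reduce the statement about fractional Kendall-Tau distance to a statement about ordinary permutations by associating to each doubly stochastic matrix with entries multiples of $1/r$ a suitable ``blown-up'' permutation on $rn$ slots. Concretely, given such a matrix $A$, I would expand each element $e$ into $r$ copies and each position $i$ into $r$ consecutive slots, placing the copies of $e$ according to the masses $A_{ei}$ (which are multiples of $1/r$, so this is well-defined): $r\cdot A_{ei}$ copies of $e$ go into the block of slots corresponding to column $i$. The $r$-index $I_e^A$ is, up to the block structure, exactly the position (divided by $r$, rounded up) of the \emph{first} copy of $e$ in this blown-up permutation. This makes the combinatorial conditions (1)--(4) in Definition~\ref{d:frac_KT} comparable to ordinary inversions among first-copies, and it makes $\dfr(A,B)$ comparable to the footrule/transportation distance of the blown-up permutations.

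The main steps, in order, would be: first, show that $\dkt(A,B)$ (the number of fractionally inverted pairs) is at most the number of \emph{ordinary} inverted pairs of first-copies in the two blown-up permutations, which in turn is at most the Kendall-Tau distance $\dkt$ of the full blown-up permutations on $rn$ elements; the cases (3) and (4) (ties in one matrix, non-ties in the other) need the observation that a tie $I_e^A = I_{e'}^A$ forces the first copies of $e,e'$ to lie in the same length-$r$ block, so any pair counted there contributes an honest inversion once the blocks differ. Second, invoke the standard inequality $\dkt \le \dfr$ on permutations (stated right after Definition~\ref{d:FR}) to bound this by the footrule distance of the blown-up permutations. Third, show that the footrule distance of the blown-up permutations is at most $O(r^2)\cdot \dfr(A,B)$: a unit of mass moved from column $i$ to column $j$ in the transportation LP of Definition~\ref{d:distance_lp} corresponds to $r$ copies each moving across roughly $r|i-j|$ slots (a block has width $r$), giving a factor $r^2$; summing over all mass gives $\dfr(\text{blown-up}) \le r^2 \cdot \dfr(A,B)$ up to the constant, and combining the three steps yields $\dkt(A,B) \le 2r^2\,\dfr(A,B)$.

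An alternative, possibly cleaner route avoids blow-ups entirely: bound $\dkt(A,B)$ directly by arguing that each fractionally inverted pair $(e,e')$ witnesses a ``crossing'' of the $r$-indices, and that the $r$-index of any element can change between $A$ and $B$ only if a total of at least $1/r$ mass in its row crosses the boundary at that index; since $\sum_{e}|I_e^A - I_e^B|$ is controlled by $\dfr(A,B)$ (each unit of transported mass over distance $|i-j|$ can shift at most $\lceil r|i-j|\rceil$ of the $r$-indices it passes, crudely), and the number of inverted pairs is at most $\bigl(\sum_e |I_e^A-I_e^B|\bigr)$ times the maximal number of elements sharing an index (which is at most $r$, since an index is ``reached'' with $1/r$ mass and columns have mass $1$), one gets a product bound of the form $r \cdot (r \cdot \dfr(A,B)) = r^2\,\dfr(A,B)$.

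The hard part will be the tie cases (3) and (4) of Definition~\ref{d:frac_KT} together with getting the constant right: ties mean many elements can share a single $r$-index (up to $r$ of them), so a single boundary-crossing of mass can flip a whole bundle of pairs at once, and one must check that the blow-up (or the direct counting) still absorbs this into the $r^2$ factor and not a larger power; I expect the bookkeeping of how a $1/r$ unit of transported mass simultaneously affects the indices of multiple elements — and hence multiple pairs — to be where the argument needs the most care.
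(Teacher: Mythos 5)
Your main (blow-up) route has a genuine gap at exactly the step you wave through: the tie cases (3)--(4) of Definition~\ref{d:frac_KT}. If $I_e^A=I_{e'}^A$ but $I_e^B\neq I_{e'}^B$, the first copies of $e,e'$ do lie in the same length-$r$ block of the blow-up of $A$, but whether this pair becomes an \emph{honest} inversion between the two blown-up permutations depends entirely on the within-block order you assign to those first copies, which is a free choice of the construction; with a canonical (say lexicographic) within-block order the pair need not be inverted at all, and then your first inequality $\dkt(A,B)\le\#\{\text{inversions of first copies}\}$ simply fails. To repair it you must order first copies inside each block of $A$'s blow-up in decreasing order of their $B$-index (and symmetrically for $B$'s blow-up), and then check that this adversarial within-block ordering is still compatible with the copy-matching/slot-assignment you need in the third step, where the footrule of the blow-ups is charged to the transportation LP of Definition~\ref{d:distance_lp} (stationary copies must essentially keep their slots, moved copies must be charged $\approx r|i-j|$ each). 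That coordination is precisely the ``bookkeeping'' you defer, and without it neither the inequality chain nor the constant $2$ in $2r^2$ is established. Your alternative direct route is closer to a complete argument, but its key inequality ``$\#\{\text{inverted pairs}\}\le\bigl(\sum_e|I_e^A-I_e^B|\bigr)\cdot r$'' is not justified as stated: you need (i) the observation that every inverted pair has at least one endpoint whose $r$-index changed, (ii) a charging argument placing the partner's index in the interval swept by that endpoint (at most $r$ elements per index, since an element with $r$-index $i$ must have $A_{ei}\ge 1/r$ and columns have unit mass), and (iii) the per-row bound $|I_e^A-I_e^B|\le r\cdot(\text{transport cost of row }e)$; carried out carefully this yields $C\,r^2\cdot\dfr(A,B)$ but with a constant that does not obviously come out as $2$.

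For comparison, the paper proves the lemma by a different, more local decomposition: using semi-integrality of the optimal transport plan (its values are multiples of $1/r$), it transforms $A$ into $B$ through intermediate matrices $A'$ obtained by exchanging $1/r$ mass between a ``right'' entry $(e,i)$ and a ``left'' entry $(e',j)$, so that $\dfr(A,B)=\dfr(A,A')+\dfr(A',B)$ with $\dfr(A,A')=2|i-j|/r$; a single such move creates inverted pairs only of the form $(e,\alpha)$ or $(e',\alpha)$ with $I_\alpha$ in $[i,j]$, hence at most $4r|i-j|$ of them, and the triangle inequality for the fractional Kendall-Tau distance (Claim~\ref{c:metric}) sums these bounds to give exactly $2r^2$. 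Your direct-counting sketch shares the same core counting fact (at most $r$ elements per index) but globally rather than move-by-move; to submit it as a proof you would need to write out the charging argument and either accept a larger constant or refine the count.
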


\begin{proof}[Proof of Lemma~\ref{l:equivalence}]
    We construct a doubly stochastic matrix $A'$ for which the following properties hold,
    \begin{enumerate}
        \item The entries of $A'$ are multiples of $\frac{1}{r}$.
        \item $\dfr(A,B) = \dfr(A, A') + \dfr(A',B)$.
        \item $\mathrm{d}_{\mathrm{KT}}( A, A' ) \leq 2r^2 \cdot  \dfr(A, A')$.
    \end{enumerate}
    
\noindent Once the above properties are established, Lemma~\ref{l:equivalence} follows by repeating the same construction until matrix $B$ is reached
and by using the fact that the \textit{fractional Kendall-Tau distance} of Definition~\ref{d:frac_KT}
satisfies the triangle inequality.\\

\noindent Before proceeding with the construction of $A'$, we present the following corollary that follows by an easy exchange argument.
\begin{corollary}\label{c:flow}
    Let the stochastic matrices $A,B$ with entries multiples of $1/r$, the values $f_{ij}^e$ of the optimal solution in the linear program of Definition~\ref{d:distance_lp} (the min-cost transportation problem defining the FootRule distance $\mathrm{d}_{\mathrm{FR}}(A,B)$) are multiples of $1/r$.
    \end{corollary}

\noindent In order to construct the matrix $A'$ satisfying the Properties~$1$-$3$, we consider three different classes of the entries $(e,i)$. In particular, we call an entry $(e,i)$.
    \begin{enumerate}
        \item \textit{right} if and only if $f_{ij}^e > 0$ for some $j > i$.
        \item \textit{left} if and only if $f_{ij}^e > 0 $ for some $j < i$.
        \item \textit{neutral} if and only if $f_{ij}^e = 0$ for all $j \neq i$.
    \end{enumerate}
    
\noindent Note that the above classes do not form a partition of the entries since an entry $(e,i)$ can be both \textit{left} and \textit{right} at the same time.    

\begin{corollary}\label{cor:2}
    Given two doubly stochastic matrices $A \neq B$, there exist entries $(e,i)$ and $(e',j)$ such that
    \begin{enumerate}
        \item $j > i$
        \item the entry $(e,i)$ is right
        \item the entry $(e',j)$ is left
        \item the entry $(\alpha,\ell)$ is \textit{neutral} for all $\alpha \in U$ and $\ell \in \{i+1,j-1\}$
    \end{enumerate}
    \end{corollary}
    
    We construct the matrix $A'$ from matrix $A$ as follows. Consider two entries $(e,i)$ and $(e',j)$ with the properties that Corollary~\ref{cor:2} illustrates. The doubly stochastic matrix $A'$ is constructed by moving $1/r$ mass from entry $(e,i)$ to entry $(e,j)$ and by moving $1/r$ mass from entry $(e',j)$ to entry $(e',i)$. More formally,
    \begin{equation*}
        A'_{\alpha \ell} = 
        \begin{cases} 
            A_{\alpha \ell} - \frac{1}{r} & \text{ if } (\alpha,\ell)=(e,i) \\
            A_{\alpha \ell} - \frac{1}{r} & \text{ if } (\alpha,\ell)=(e',j)\\
            A_{\alpha k} + \frac{1}{r} & \text{ if } (\alpha,\ell)=(e',i) \\
            A_{\alpha \ell} + \frac{1}{r} & \text{ if } (\alpha,\ell)=(e,j)\\
            A_{\alpha \ell} & \text{ otherwise }
        \end{cases}
    \end{equation*}
\noindent Up next we establish the fact that $\dfr(A,B) = \dfr(A,A') + \dfr(A',B)$.
\begin{claim}
$\dfr(A',A) = 2|j-i|/r$ and 
$\dfr(A',B) = \dfr(A,B) - 2|j-i|/r$.    
\end{claim}
\begin{proof}
The fact that $\dfr(A',A) = 2|j-i|/r$ is trivial. We thus focus on showing that 
$\dfr(A',B) = \dfr(A,B) - 2|j-i|/r$.\\

\noindent Since $(e,i)$ is \textit{right}, there exists an index $\ell(i) >i$ such that $f_{i \ell(i)}^e > 0$. Moreover $f_{i \ell(i)}^e \geq 1/r$ since $f_{i \ell(i)}^e$ is multiple of $1/r$. Notice that $\ell(i) \neq \ell$ for $\ell \in \{i+1,j-1\}$ since all the entries $(\alpha,\ell)$ are \textit{neutral} (otherwise $\sum_{\alpha \in U} B_{\alpha \ell } > 1$). As a result, transfering $1/r$ mass from entry $(e,i)$ to entry $(e,j)$ decreases the FootRule distance between $A$ and $B$ by $1/r\cdot|i-j|$ since the \textit{final destination} of the $1/r$ mass is the entry $(e,\ell(i))$ that is on the right of entry $(e,j)$, $\ell(i) \geq j$. The claim follows by applying the exact same argument for $(e',j)$.    
\end{proof}

\noindent We now establish the last property that is $\mathrm{d}_{\mathrm{KT}}(A,A') \leq 2r^2 \cdot \dfr(A,A')$.
    
    \begin{claim}
        $\mathrm{d}_{\mathrm{KT}}( A', B ) \leq 4r \cdot | i - j|$
    \end{claim}
    
    \begin{proof}
        \noindent Notice that apart from $e,e'$, the $r$-index of each element is the same in both $A$ and $A'$ ($I_\alpha^{A} = I_\alpha^{A'}$ for all $\alpha \in U \setminus \{e, e'\}$). As a result, by Definition~\ref{d:frac_KT}, we get that the only inverted pairs can be of the form $(e,\alpha)$ or $(e',\alpha)$.\\
    
        \noindent In case $I_e^A \leq i-1$ then $I_e^A = I_e^{A'}$ and there is no inverted pair of the form $(e,\alpha)$. In case $I_e^A = i$ then $i \leq I_e^{A'} \leq j$ and any element $\alpha$ with $I_\alpha^A =
        I_\alpha^{A'} \in \{1,i-1\}\cup \{j+1,n\}$ cannot form an inverted pair with $e$. As a result, a pair $(e,\alpha)$ can be inverted only if $i \leq I_\alpha^A = I_\alpha^{A'} \leq j$. Since the entries of A are multiples of $1/r$ and $A$ is doubly stochastic, there are at most
        $r$ positive entries at each column of $A$. As a result, there are at most $r \cdot (j-i+1)$ inverted pairs of the form $(e,\alpha)$. With the symmetric argument one can show that there are at most $r\cdot |j-i+1|$ of the form $(e',\alpha)$. Overall there are at most $2r\cdot |j-i+1|$ inverted pairs between $A$ and $A'$ that are less than $4r\cdot |j-i|$ since $j > i$.
    \end{proof}
    \end{proof}

\noindent We conclude the section with Lemma~\ref{l:potential}.
Then Lemma~\ref{l:r_integral} follows by Lemma~\ref{l:potential} 
and~\ref{l:equivalence}.
\begin{lemma}\label{l:potential}
Let $\pi^0,\pi^1,\ldots,\pi^T$ the permutations produced by Algorithm~\ref{alg:greedy_rounding} and $e_1,\ldots,e_T$ the elements that Algorithm~\ref{alg:greedy_rounding} moves to the first position at each round $t$. For any sequence of doubly stochastic matrices $B^0 = \pi^0,B^1,\ldots,B^T$ with $B_{e_t 1}^t \geq 1/r$,
$$\sum_{t=1}^T \dkt(\pi^t,\pi^{t-1}) \leq \sum_{t=1}^T \dkt(B^t,B^{t-1}) + r \cdot T
$$
\end{lemma}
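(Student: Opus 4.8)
The plan is an amortized (potential-function) argument with potential $\Phi^t := \dkt(\pi^t, B^t)$, the fractional Kendall-Tau distance between the algorithm's permutation at round $t$ and the matrix $B^t$. Since $B^0 = \pi^0$ we have $\Phi^0 = \dkt(\pi^0,\pi^0) = 0$, and $\Phi^T \ge 0$, so it suffices to prove the per-round bound
$$\dkt(\pi^t,\pi^{t-1}) + \Phi^t - \Phi^{t-1} \;\le\; \dkt(B^t,B^{t-1}) + r ,$$
which telescopes to exactly the claimed inequality. Using the triangle inequality for fractional Kendall-Tau (Claim~\ref{c:metric}) in the form $\dkt(\pi^{t-1},B^t) \le \dkt(\pi^{t-1},B^{t-1}) + \dkt(B^{t-1},B^t)$ together with the symmetry of the distance, the per-round bound reduces to the purely local statement
$$\dkt(\pi^t,\pi^{t-1}) + \dkt(\pi^t,B^t) - \dkt(\pi^{t-1},B^t) \;\le\; r \qquad (\star) .$$

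To prove $(\star)$ I would exploit that a single step of Algorithm~\ref{alg:greedy_rounding} merely moves the one element $e_t$ from its position $k_t$ in $\pi^{t-1}$ to the front of $\pi^t$. Three facts drive everything: for a permutation the $r$-index equals the position, so $I_{e_t}^{\pi^{t-1}} = k_t$ and $I_{e_t}^{\pi^t} = 1$; the hypothesis $B^t_{e_t 1} \ge 1/r$ forces $I_{e_t}^{B^t} = 1$; and, since column $1$ of the doubly stochastic matrix $B^t$ sums to $1$, at most $r$ elements $\alpha$ have $B^t_{\alpha 1}\ge 1/r$, i.e.\ at most $r$ elements have $r$-index $1$ in $B^t$. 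I would then decompose all three distances in $(\star)$ pair by pair. Any pair $\{\alpha,\beta\}$ avoiding $e_t$ contributes $0$: it is never inverted between $\pi^{t-1}$ and $\pi^t$ (moving $e_t$ leaves the relative order of all other elements intact), and because the position map on $U\setminus\{e_t\}$ is order-preserving while permutation $r$-indices are distinct, such a pair has the same inversion status against $B^t$ in $\pi^{t-1}$ and in $\pi^t$ (a quick check of the four cases of Definition~\ref{d:frac_KT}).

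It remains to handle the pairs $(e_t,\alpha)$ with $\alpha\ne e_t$. Writing $x_\alpha, p_\alpha, q_\alpha \in\{0,1\}$ for the indicators that $(e_t,\alpha)$ is inverted between $\pi^t$ and $\pi^{t-1}$, between $\pi^t$ and $B^t$, and between $\pi^{t-1}$ and $B^t$ respectively, the left-hand side of $(\star)$ equals $\sum_{\alpha\ne e_t}(x_\alpha + p_\alpha - q_\alpha)$, and $x_\alpha = 1$ precisely when $\alpha$ lies ahead of $e_t$ in $\pi^{t-1}$. A short case analysis on $I_\alpha^{B^t}$ finishes the argument: if $I_\alpha^{B^t}\ge 2$ then $p_\alpha = 0$ and $q_\alpha = x_\alpha$ (checking the four cases with $I_{e_t}^{\pi^t}=1$, $I_{e_t}^{B^t}=1$ kills cases $1$--$4$ for $p_\alpha$, and only case $1$ survives for $q_\alpha$, exactly when $x_\alpha=1$), so the contribution is $0$; if $I_\alpha^{B^t}=1$ then $p_\alpha = q_\alpha = 1$ (case $4$ of Definition~\ref{d:frac_KT} fires in both, since $I_{e_t}^{B^t}=I_\alpha^{B^t}=1$ while the permutation $r$-indices differ), so the contribution is $x_\alpha\le 1$. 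Hence the left-hand side of $(\star)$ equals $\sum_{\alpha:\,I_\alpha^{B^t}=1,\,\alpha\ne e_t} x_\alpha \le |\{\alpha\ne e_t : I_\alpha^{B^t}=1\}| \le r-1$, which proves $(\star)$.

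The step I expect to be the main obstacle is the pair-by-pair bookkeeping: verifying cleanly that pairs not involving $e_t$ cancel out, and pinning down $p_\alpha - q_\alpha \in \{-x_\alpha, 0\}$ by going through all four inversion cases of Definition~\ref{d:frac_KT} against the three fixed indices $I_{e_t}^{\pi^{t-1}}=k_t$, $I_{e_t}^{\pi^t}=1$, $I_{e_t}^{B^t}=1$. Once this local inequality is secured, the reduction via Claim~\ref{c:metric}, the telescoping, and the observations $\Phi^0 = 0$ and $\Phi^T\ge 0$ yield the lemma at once.
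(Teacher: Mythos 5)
Your proposal is correct and follows essentially the same route as the paper: the same potential $\dkt(\pi^t,B^t)$ with telescoping, the same reduction via Claim~\ref{c:metric} to the local inequality $\dkt(\pi^t,\pi^{t-1})+\dkt(\pi^t,B^t)-\dkt(\pi^{t-1},B^t)\leq r$, and the same counting of the at most $r$ elements with $r$-index $1$ in $B^t$. Your pair-by-pair bookkeeping is just a more explicit rendering of the paper's argument (and even yields the slightly sharper bound $r-1$ per round).
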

The proof of Lemma~\ref{l:potential} is based on the following two inequalities, $\dkt(\pi^t,\pi^{t-1}) + \dkt(\pi^t,B^t) - \dkt(\pi^{t-1},B^t) \leq r$ and $\dkt(\pi^{t-1},B^t) - \dkt(\pi^{t-1},B^{t-1}) \leq \dkt(B^t,B^{t-1})$. The second inequality follows by the triangle inequality established in Claim~\ref{c:metric}. The first follows by the fact that $I_{e_t}^{B^t} = 1$ and the definition of Fractional Kendall-Tau distance.

\begin{proof}[Proof of Lemma~\ref{l:potential}]
Since $B_{e_t}^t \geq 1/r$, the $r$-index of element $e_t$ in matrix $B^t$ is $1$, $I_{e_t}^{B^{t}} = 1$. We first show that, $$\mathrm{d}_{\mathrm{KT}}\left(\pi^t, \pi^{t-1}\right)
+ \mathrm{d}_{\mathrm{KT}}\left(\pi^t, B^{t}\right)
- \mathrm{d}_{\mathrm{KT}}\left(\pi^{t-1}, B^{t}\right) \leq r$$ 
To simplify notation let $k_t$ the position of $e_t$ in $\pi^{t-1}$. Notice that $\mathrm{d}_{\mathrm{KT}}\left(\pi^t, \pi^{t-1}\right) = k_t -1$. Out of the $k_t - 1$ elements lying on the left of $e_t$ in $\pi^{t-1}$ there are most $r-1$ elements $\alpha$ with $I_{\alpha}^{B^t} = 1$ (these elements must admit $B_{\alpha 1}^t \geq 1/r$). The rest of the $k_t - 1$ elements admit $r$-index $I_{\alpha}^{B^t} \geq 2$ and thus form inverted pairs with $e_t$ when considering $\pi^{t-1}$ and $B^t$. When $e_t$ moves to the first positions (permutation $\pi^t$) these inverted pairs are deactivated ($I_{e_t}^{B^t} = 1$) and new inverted pairs are created between $e_t$ and $\alpha$ with $I_{\alpha}^{B^t} = 1$, but these new inverted pairs are at most $r$ (for any element $\alpha$ with $I_{\alpha}^{B^t}$, $B^t_{\alpha} \geq 1/r$). Also notice no additional inverted pairs $(e,\alpha)$ (with $e \neq e_t$) are created since the order between all the other elements is the same in $\pi^t$ and $\pi^{t-1}$. Overall,
$$\underbrace{\mathrm{d}_{\mathrm{KT}}\left(\pi^t, \pi^{t-1}\right)}_{ k_t - 1}
+ \underbrace{\mathrm{d}_{\mathrm{KT}}\left(\pi^t, B^{t}\right)
- \mathrm{d}_{\mathrm{KT}}\left(\pi^{t-1}, B^{t}\right)}_{\leq -k_t + 1 + r} \leq r$$

\noindent Combining the above inequality with $\mathrm{d}_{\mathrm{KT}}\left(\pi^{t-1}, B^{t}\right)
- \mathrm{d}_{\mathrm{KT}}\left(\pi^{t-1}, B^{t-1}\right) \leq \mathrm{d}_{\mathrm{KT}}\left(B^t, B^{t-1}\right)$ which follows from the triangle inequality we get,
$$\mathrm{d}_{\mathrm{KT}}\left(\pi^t, \pi^{t-1}\right)
+ \mathrm{d}_{\mathrm{KT}}\left(\pi^t, B^{t}\right)
- \mathrm{d}_{\mathrm{KT}}\left(\pi^{t-1}, B^{t-1}\right) \leq \mathrm{d}_{\mathrm{KT}}\left(A^{t}, B^{t-1}\right) +  r.$$
Finally a telescopic sum gives $\sum_{t=1}^T \mathrm{d}_{\mathrm{KT}}\left(\pi^t, \pi^{t-1}\right) \leq \sum_{t=1}^T \mathrm{d}_{\mathrm{KT}}\left(B^t, B^{t-1}\right)
+ r\cdot T + \mathrm{d}_{\mathrm{KT}}(\pi^0,B^0) - \mathrm{d}_{\mathrm{KT}}(\pi^T,B^T)
$ where $\mathrm{d}_{\mathrm{KT}}(\pi^0,B^0) = 0$.
\end{proof}

\section{Concluding Remarks}
In this work we examine the polynomial-time approximability of Multistage Min-Sum Set Cover. We present $\Omega(\log n)$ and $\Omega(r)$ inapproximability results for general and $r$-bounded request sequences, while we respectively provide $O(\log^2 n)$ and $O(r^2)$ polynomial-time approximation algorithms. Closing this gap is an interesting question that our work leaves open. Another interesting research direction concerns the competitive ratio in the online version of Dynamic Min-Sum Set Cover. 
\cite{FLPS20} provides an $\Omega(r)$ lower bound and a $\Theta\left(r^{3/2}\sqrt{n}\right)$-competitive online algorithm for $r$-bounded sequences. Designing online algorithms for a relaxation of the problem (such as the $\mathrm{Fractional}-\mathrm{MTF}$) and using the rounding schemes that this work suggests may be a fruitful approach towards closing this gap.

\bibliography{references}

\end{document}